\title{\vspace{-10mm}Rearrangement Events on Circular Genomes}
\author{Joshua Stevenson, Venta Terauds, Jeremy Sumner}
\date{\today}
\begin{document}

\pagenumbering{arabic}
\maketitle

\begin{abstract}
    Early literature on genome rearrangement modelling views the problem of computing evolutionary distances as an inherently combinatorial one. In particular, attention is given to estimating distances using the minimum number of events required to transform one genome into another. In hindsight, this approach is analogous to early methods for inferring phylogenetic trees from DNA sequences such as maximum parsimony---both are motivated by the principle that the true distance minimises evolutionary change, and both are effective if this principle is a true reflection of reality. Recent literature considers genome rearrangement under statistical models, continuing this parallel with DNA-based methods, with the goal of using  model-based methods (for example maximum likelihood techniques) to compute distance estimates that incorporate the large number of rearrangement paths that can transform one genome into another. Crucially, this approach requires one to decide upon a set of feasible rearrangement events and, in this paper, we focus on characterising well-motivated models for signed, uni-chromosomal circular genomes, where the number of regions remains fixed. Since rearrangements are often mathematically described using permutations, we isolate the sets of permutations representing rearrangements that are biologically reasonable in this context, for example inversions and transpositions. We provide precise mathematical expressions for these rearrangements, and then describe them in terms of the set of cuts made in the genome when they are applied. We directly compare cuts to breakpoints, and use this concept to count the distinct rearrangement actions which apply a given number of cuts. Finally, we provide some examples of rearrangement models, and include a discussion of some questions that arise when defining plausible models.
\end{abstract}
   
\tableofcontents

\section{Introduction}
The goal of genome rearrangement modelling is to construct phylogenetic trees from a given set of genomes which share segments of DNA, but differ in the arrangement of these segments (a segment may be reversed, for example). These segments of DNA may be contiguous collections of genes (sometimes referred to as synteny blocks) or genes themselves. We will refer to these conserved segments of DNA simply as \textit{regions}. Genome rearrangements are events which affect segments of one or more regions of DNA within a genome: physically breaking the genome in one or more places and then reconnecting in some way. Often, phylogenetic trees are constructed by first estimating the evolutionary distance between elements of a given set of genomes, and then using a distance-based tree reconstruction method to obtain a phylogeny. Evolutionary distance estimates are obtained by considering the possible genome rearrangements which can be applied to transform a given genome into another.

The theory of genome rearrangements has developed in a similar way to a number of other areas in phylogenetics. For both the reconstruction of phylogenetic trees, and the more granular problem of computing evolutionary distance between genomes, the most obvious solution is to determine the minimum `amount' of evolution required, in accordance with the Occam's razor principle. In the case of tree reconstruction, this leads to `maximum parsimony' and related approaches---methods that the general phylogenetics literature has largely moved away from over the last few decades, due to problems with bias in certain situations, as was demonstrated by \citeauthor{Felsenstein1973} as early as \citeyear{Felsenstein1973} \cite{Felsenstein1973}. Similarly, the straight-forward combinatorial approaches to estimating evolutionary distance between genomes (namely, `minimal distance' under various sets of allowed rearrangements) have been successful in many ways, yielding fast algorithms for computing particular distances \cite{Hannenhalli1996, Bader2001}, but have had limitations in terms of flexibility of both the models and the very definition of rearrangement distance. 

It is becoming increasingly clear that the next phase of development of genome rearrangement theory will rely on model-based approaches  (for example maximum likelihood) with a flexible array of available distance measures and tree reconstruction methods (see \cite{Terauds2022, Terauds2018, EgriNagy2014, Bhatia2018}). In our view, developing effective model-based methods for computing genome rearrangement distances will require formalising existing concepts and specifically expanding upon them from an algebraic perspective. \citeauthor{Bhatia2018} \cite{Bhatia2018} make an excellent contribution toward this goal, clarifying a number of concepts used in the genome rearrangement literature. We aim to expand upon this contribution through the ideas and results we present here. With a focus on rearrangements, we connect numerous competing standards in the literature, centring on one particular representation of genomes that we argue is a natural choice for model-based approaches. Further, we provide an overview of the precise algebraic operations required to translate between different algebraic representations of genomes and rearrangements as detailed in \citepair{Bhatia2018}.

In this paper, we identify and precisely describe, in algebraic terms, the biologically significant classes of rearrangements---including the most prevalent cases which have appeared in the literature---and discuss these in the context of model-based methods. In \Cref{sec:gen_as_perms}, we give an overview of the various existing representations of genomes and rearrangements as permutations, and motivate our choice of representation, which we use for the remainder of the paper. \Cref{sec:ER} focuses on symmetry: in particular, we discuss how the symmetries of circular genomes and rearrangements inform the algebraic objects we use to represent rearrangement events, as well as the way in which we define models. We subsequently discuss rearrangement models from a practical point of view in \Cref{sec:models}, giving examples to illustrate some important considerations for choosing rearrangement probabilities and estimating genome rearrangement distances. We introduce and formally define the `set of cuts' applied by a given rearrangement in \Cref{sec:cuts}, and use this idea to characterise and for the first time establish explicit expressions for the number of distinct rearrangement actions which are precisely the biologically reasonable ones. We also discuss the relationship between this idea and the important and well-studied concept of a `breakpoint'. The paper concludes with a summary including a brief discussion of possible future work.

\section{Background}
\subsection{Genomes and Rearrangements as Permutations} \label{sec:gen_as_perms}
Approaches to genome rearrangement modelling can differ greatly depending on the particular types of genomes in question. We will focus on modelling uni-chromosomal circular genomes with a fixed number of oriented regions, and `genome' will refer to a genome under these assumptions unless otherwise stated. Genomes of this type are found in bacteria (with some exceptions); see, for example \cite{Thanbichler2006}. This choice excludes some rearrangement events (namely fissions, fusions, duplications and deletions) which change the number of chromosomes or the number of regions, however this still leaves our set of possible rearrangement events much larger than the set of all inversions, for example, which is a model commonly studied in the literature \cite{Hannenhalli1996, Caprara1997, Bafna1993, Berard2007, Bafna1996, Bader2001, Baudet2015, Lin2008}. 

The use of permutations to represent genomes is ubiquitous in the literature; however, even when given a particular set of constraints---such as the assumption that genomes are circular---there are numerous choices which need to be made in terms of the way we represent genomes and rearrangement events. \citepair{Bhatia2018} categorise these choices into two main paradigms, namely the `position' and `content' paradigms. The key difference is that in the position paradigm, the locations of regions in the genome are encoded via a fixed position labelling, whereas in the content paradigm, we only keep track of each region's location in relation to other regions. To complicate matters, even after choosing a paradigm, there remain multiple choices of mathematical representation. We presently explain one particular way to describe genomes using the position paradigm, and then explore other representations, including ones falling under the content paradigm.

To represent a genome in the position paradigm, we first label the $n$ different genomic regions with the set of integers $[n]:=\set{1,...,n}$. We then describe the position and orientation of each region as a map from the set of regions to the set of signed positions $[\ncol{n}] := \{1, ..., n, \ncol{1}, ... \ncol{n}\}$, where the bar above a position label indicates a reversed region. For example, consider a genome with 5 regions represented by the map $\sigma = \left[\begin{smallmatrix}
        1 & 2 & 3 & 4 & 5 \\
        \ncol{1} & 4 & 3 & \ncol{5} & \ncol{2}
\end{smallmatrix}\right]$, where the top row represents the pre-image (regions) and the bottom row represents the image (signed positions). Often the top row is omitted, which is unambiguous when the region labels are ordered as they are here. Note that in this form, the map $\sigma$ is not a bijection, because its pre-image is $[n]$. To remedy this, we extend the domain of $\sigma$ to all of $[\ncol{n}]$ via the rule
\begin{align}
        \sigma(\ncol{\,i\,}) := \ncol{\sigma(i)} , \label{eq:ext_perm}
\end{align}
for all $i\in [n]$.
The extended map $\sigma$ is a bijection and hence a permutation on the set $[\ncol{n}]$. The subset of permutations on $[\ncol{n}]$ which satisfy (\ref{eq:ext_perm}) forms a subgroup of $S_{[\ncol{n}]}$ (the group of permutations on $[\ncol{n}]$), namely the hyperoctahedral group, which we denote as $\HO_n$. This extension means we can write $\sigma$ simply as $\left[\begin{smallmatrix}
        1 & 2 & 3 & 4 & 5 \\
        \ncol{1} & 4 & 3 & \ncol{5} & \ncol{2}
\end{smallmatrix}\right]$, or in one-row notation as $[\ncol{1} 4 3 \ncol{5} \ncol{2}]$, remembering that its action on the remaining elements of $[\ncol{n}]$ is given by \Cref{eq:ext_perm}. We of course have a natural embedding of $\mathcal{S}_n$ as a subgroup of $\HO_n$, obtained by extending each permutation in $\Sy_n$ via the above process. In this paradigm, rearrangements are also expressed as elements of the hyperoctahedral group, and interpreted as maps from the set $[\ncol{n}]$ of signed positions to itself.

We now convert the genome $\sigma = \left[\begin{smallmatrix}
        1 & 2 & 3 & 4 & 5 \\
        \ncol{1} & 4 & 3 & \ncol{5} & \ncol{2}
\end{smallmatrix}\right]$ to illustrate some of the other commonly-used permutation representations of genomes. To aid our explanation, consider the two permutations $c=(1...n)(\ncol{n}...\ncol{1})$ and $r=(1\ncol{1})...(n\ncol{n})$ in $S_{[\ncol{n}]}$, written in cycle notation (i.e, $c$ maps $1\mapsto 2, 2\mapsto3, ...,  n\mapsto 1, \ncol{1}\mapsto\ncol{n}, \ncol{2}\mapsto\ncol{1}$ etc, and $r$ maps $1\leftrightarrow\ncol{1}$, ..., $n\leftrightarrow\ncol{n}$). Note that $c$ is not an element of $\HO_n$ whereas $r$ is. Using the permutations $c$ and $r$, we can convert $\sigma$ into three further representations of the same genome. These are detailed in \Cref{tbl:genome_representations}. 

\begin{table}[H]
    \begin{center}
    \begin{tabular}{ c c c c c }
     \textbf{expression} & \textbf{paradigm} & \textbf{version} & \textbf{sign means} & \textbf{example} \\
     $\sigma$ & position & region $\rightarrow$ position & orientation & $[\ncol{1}\ 4\ 3\ \ncol{5}\ \ncol{2}]$ \\ 
     $\sigma^{-1}$ & position & position $\rightarrow$ region & orientation & $[\ncol{1}\ \ncol{5}\ 3\ 2\ \ncol{4}]$ \\ 
     $\sigma^{-1}c\sigma$ & content & cycles & orientation & $(\ncol{1}\ncol{5}32\ncol{4})(4\ncol{2}\ncol{3}51)$\\ 
     $\sigma^{-1}c\sigma r$ & content & adjacencies & $\ncol{\text{head}}$/tail & $(1\ncol{5})(53)(\ncol{3}2)(\ncol{2}\ncol{4})(4\ncol{1})$
    \end{tabular}
    \end{center}
    \caption{Converting a permutation $\sigma$ representing an instance of a genome in the position paradigm (reg $\rightarrow$ pos) to various other paradigms. Recall that since $n=5$ in this example, we have $c=(12345)(\ncol{5}\ncol{4}\ncol{3}\ncol{2}\ncol{1})$ and $r=(1\ncol{1})(2\ncol{2})(3\ncol{3})(4\ncol{4})(5\ncol{5})$.}
    \label{tbl:genome_representations}
\end{table}

Note that each representation is defined based on the interpretation of the input and output of the permutation. The position paradigm representations are maps from regions to positions or from positions to regions. The content paradigm representations are maps from each region to the following region (in a clockwise direction) or maps between heads and tails of regions (in this case, the sign of the region number indicates either a head or tail, rather than orientation of a region). Note that the examples in the final column of \Cref{tbl:genome_representations} are written in a notation chosen to most easily express the genome in the particular paradigm. It is also common to see permutations representing position paradigm genomes, for example, written in cycle notation as well.

The expressions in the first column in \Cref{tbl:genome_representations} are concise because we started in the position paradigm (which in one sense contains the `most' data). If we began with a genome in the content paradigm, obtaining the equivalent genome in the position paradigm is a less pleasant process. For example, to transform $\pi = (\ncol{1}\ncol{5}32\ncol{4})(4\ncol{2}\ncol{3}51)$ (content/cycles paradigm) in \Cref{tbl:genome_representations} into $\sigma^{-1} = [\ncol{1}\ \ncol{5}\ 3\ 2\ \ncol{4}]$ (position paradigm, positions-to-regions), we must consider one of the cycles as the bottom row of a signed permutation. Of course, there are two cycles to choose from and each may be written in $n$ different ways, due to dihedral symmetry of the genome. If we decide to choose the cycle containing `1', we can write $\sigma^{-1}$ down in a more systematic way, by setting $\sigma^{-1}(1) = 1$,
$\sigma^{-1}(i) = \pi^{i-1}(1)$ for $i\neq 1$ and then extending via $\sigma^{-1}(\ncol{i}) = \ncol{\sigma^{-1}(i)}$. Still, this is far from being as nice as the simple algebraic expressions in the first column of \Cref{tbl:genome_representations}.

Paradigm choice also determines how rearrangements are represented and applied. For example, in the position paradigm where genomes are maps from regions to positions, rearrangements are thought of as maps from positions to positions. In this case, applying these rearrangements is done via composition. For example, if $\sigma$ represents a genome and $\alpha$ represents a rearrangement, then $\alpha \sigma$ represents a genome where the positions of the regions have been shuffled around by $\alpha$; region $i$ is now in position $\alpha(\sigma(i))$. In the content paradigm, considering genomes to be products of two $n$-cycles (as in the third row of \Cref{tbl:genome_representations}), rearrangements are maps from regions to regions, and are applied via conjugation. If $\sigma^{-1}c\sigma$ represents a genome in the content paradigm, then $\sigma^{-1}c\sigma(i)=j$ means that region $i$ is followed by region $j$. For example, the rearrangement $\alpha = (jk)$ maps region $j$ to region $k$ and is applied via conjugation, and thus $\alpha^{-1}\sigma^{-1}c\sigma\alpha(i)$ will map $i$ to $k$.

A clear benefit of the content paradigm is that we have a 1-1 correspondence with circular genomes, without needing to resort to cosets or formal sums of permutations, as is the case when working in the position paradigm (see \Cref{sec:ER} for details). An important disadvantage though, is that the outcome of `applying' a rearrangement to a genome in the content paradigm depends on the genome itself. For example, \citeauthor{Meidanis2000} state: \textit{``There is no way to define a class of permutations that will be `the reversals', valid for all genomes''} \cite{Meidanis2000}, referring to defining inversions as permutations to be applied by left multiplication in the content paradigm. This makes it difficult to algebraically define a model in which not all rearrangements are equally probable, although an appropriate group action can be defined by converting to the position paradigm, applying the rearrangement and then converting back. For example, the genome instances are written $\sigma c \sigma^{-1}$, and a rearrangement instance $\alpha$ is applied using the group action defined via $\sigma c \sigma^{-1} \longmapsto \sigma \alpha c \alpha^{-1} \sigma^{-1}$, where $\alpha$ is a position paradigm rearrangement instance and $\sigma$ is a position paradigm genome instance. Under this group action, we can in fact write down a set of all permutations which represent reversals, for example.

Conversely, in the position paradigm, we can (in theory) define a set of rearrangements which performs a specific type of action (for example, inversions of two adjacent regions). In \Cref{sec:cuts}, we define `$k$-cut' rearrangements in the position paradigm. An important difference between our definition and the definition of a `$k$-break' rearrangement in the content paradigm (algebraically, conjugation of a genome by a $k$-cycle \cite{Feijao2013, Bhatia2018}), is that the action of a $k$-cut rearrangement will always perform exactly $k$ cuts, whereas, the action of a $k$-break operation may differ depending on the genome it is being applied to. Even when the rearrangement does perform $k$ cuts, the distribution of cuts across the genome will depend on the genome itself. While this disadvantage makes representing genomes and rearrangements using the content paradigm for model-based approaches impractical, the content paradigm has provided the field with a number of important results and efficient algorithms. As described in \citepair{Bhatia2018}, moving between paradigms can provide additional understanding. In fact, in \Cref{sec:cuts} follows we will use the permutation $c=(1...n)(\ncol{n}...\ncol{1})$ to aid our description of different classes of rearrangements.

\subsection{Circular symmetries and equivalent rearrangements} \label{sec:ER}

\begin{figure}
    \centering
    \includegraphics[width=0.9\textwidth]{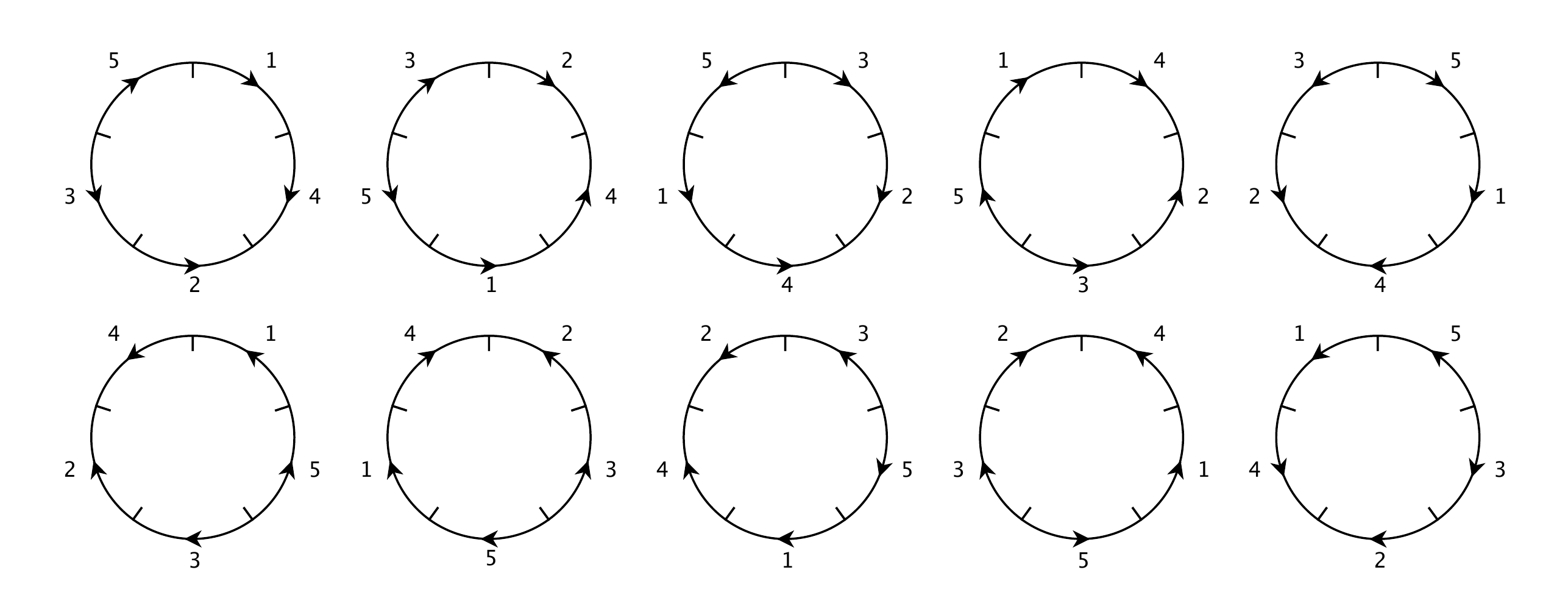}
    \caption{Diagrams representing each of the $2n=10$ instances of the example genome $\mathcal{D}_n\sigma$ arising from its dihedral symmetries. The bottom left diagram matches the instance $\sigma = \left[\begin{matrix}
            \ncol{1} & 4 & 3 & \ncol{5} & \ncol{2}
        \end{matrix}\right]$ where positions run clockwise starting from position 1 in the top-right of the circle.}
    \label{fig:example}
\end{figure}

In \Cref{sec:gen_as_perms}, we discussed the fact that in the content paradigm, a single $k$-break operation can represent a number of physically different `rearrangements' depending on the genome it is applied to. In the position paradigm, the converse can occur for both genomes and rearrangements, but can be easily corrected as we will see shortly. In the following, we consider circular genomes under the position paradigm (regions $\rightarrow$ positions). Whether regions are oriented or otherwise, circular genomes modelled with no distinguished positions have dihedral symmetry---that is circular genomes are physical objects that are unchanged by rotations and flips in physical space. Because of this, different signed permutations can correspond to the same genome. For example, \Cref{fig:example} shows several diagrams representing a particular genome, each portraying the same genome in a different physical orientation, and thus corresponding to a different signed permutation. To express this algebraically, we consider which rearrangements in $\HO_n$ leave the \textit{genome} unchanged.
In our formulation, a single position rotation of a genome is given by the rearrangement \[r := (123...n)(\ncoverline{1}\ncoverline{2}\ncoverline{3}...\ncoverline{\vphantom{1}n}) \in \HO_n.\]
This is simply the analogous rotation in the dihedral group $\mathcal{D}_n$ extended to $\HO_n$. Reflections are more complicated, however, as we will see. For $n=5$, an example of a reflection is \[f := (2,\ncoverline{5})(\ncoverline{2},5)(3,\ncoverline{4})(\ncoverline{3},4)(1,\ncoverline{1}) \in \HO_n.\]
Of course, we cannot obtain this by taking the permutation $(2,5)(3,4) \in \mathcal{D}_5$ which represents a reflection for a genome with unoriented regions, and simply extending to $\HO_5$ via \Cref{eq:ext_perm}, as one might first think. This is because for genomes with oriented regions, reflections change the orientation of the regions. Regardless, the group generated by these two signed permutations is precisely the set of rearrangements which leave a given genome unchanged, and is isomorphic to $\mathcal{D}_n$. This can be seen by considering the definition of the dihedral group involving two generators, $\left\langle{r}, {f} \mid {r}^{n}={f}^{2}=({fr})^{2}=e\right\rangle \cong \mathcal{D}_n$. Throughout the paper we will denote this copy of the dihedral group as $\D_n$. 

Dihedral symmetry can be incorporated into calculations by considering genomes as cosets rather than individual permutations, as is already outlined in the literature \cite{Sumner2017, Terauds2018, Terauds2022}. In this case, each genome corresponds to a particular coset. For example, the genome corresponding to the permutation $\sigma$ is given by the coset $\D_5\sigma = \{ d\sigma : d \in \D_5 \}$, collecting all the reflections and rotations of $\sigma$, which we individually refer to as \textit{instances} of the genome $\D_5\sigma$. While cosets are a natural way to factor out symmetries, we will see shortly that it is more appropriate to use a different set of mathematical objects to represent genomes and rearrangements.

Genome rearrangements can of course also exhibit symmetry. When we multiply a permutation ($\alpha: \mathrm{positions} \rightarrow \mathrm{positions}$) representing a rearrangement into an instance of a genome $d\sigma \in \mathcal{D}_n\sigma$, the instance $\alpha d \sigma$ that we get back might not represent the same genome as the result of $\alpha$ acting on some other instance, say $d_2\sigma \in \mathcal{D}_n\sigma$. We cannot choose another element of $\D_n \alpha$ which achieves this. For example, the permutations $(1\ncol{1})$ and $(1\ncol{1})(12345)(\ncol{12345})$, are both expressions showing the application of a rearrangement $(1\ncol{1})$ to an instance of the the reference genome (i.e, an element of $\D_n$), but the resulting permutations are not both instances of the same genome. For this reason, \citeauthor{Terauds2018} \cite{Terauds2018} consider formal sums of rearrangements corresponding to \textit{double cosets} $\D_n \alpha \D_n$, $\alpha \in \HO_n$. An element of one of these double cosets looks like $d_1\alpha d_2$, with $d_1,d_2 \in \D_n$. Now, we can choose $\alpha d(d_2)^{-1}$ in the double coset, such that $d_2 \sigma$ is mapped to the same genome as $d\sigma$. Using formal sums instead of double cosets allows one to keep track of the probabilities of the possible outcomes when a rearrangement is applied to a genome. To allow ourselves to take sums of group elements, we move from the group itself to the group algebra $\mathbb{C}[\HO_n]$---a vector space (over the complex numbers, although in practice our coefficients will always lie between 0 and 1) which has the group as its basis, where we also allow ourselves to take products of elements. The product is simply the same as in the group, and is distrubuted over addition. As an example, we will describe the genome represented by $\sigma = [\ncol{1} 4 3 \ncol{5} \ncol{2}]$, used as an example in \Cref{fig:example} as an element of the group algebra.

Define the \textit{symmetry element}, \[ \mathbf{z} := \tfrac{1}{2n} \sum_{d \in \mathcal{D}_n} d \ \in\  \mathbb{C}[\HO_n] \] where the factor of $\frac{1}{2n}$ is so that the coefficients of the permutations in the sum all add to 1. For example, take the instance $\sigma = [\ncol{1} 4 3 \ncol{5} \ncol{2}]$. The product,
\begin{align*}
\textbf{z}\sigma = \sum_{d \in \mathcal{D}_n} d\sigma = \frac{1}{10} \Big(&[1, \ncol{3}, \ncol{4}, 2, 5] + [2, \ncol{4}, \ncol{5}, 3, 1]+
 [3, \ncol{5}, \ncol{1}, 4, 2]+
 [4, \ncol{1}, \ncol{2}, 5, 3]\\[-0.8em]
 +\  & [5, \ncol{2}, \ncol{3}, 1, 4]+
[\ncol{1}, 4, 3, \ncol{5}, \ncol{2}]+
 [\ncol{2}, 5, 4, \ncol{1}, \ncol{3}]+
 [\ncol{3}, 1, 5, \ncol{2}, \ncol{4}]\\
 +\  & [\ncol{4}, 2, 1, \ncol{3}, \ncol{5}]+
 [\ncol{5}, 3, 2, \ncol{4}, \ncol{1}]\Big),
\end{align*}
encodes every instance of the genome represented by $\sigma$.

The \textit{genome algebra}, defined by \citepair{Terauds2018}, is the sub-algebra of $\mathbb{C}[\HO_n]$ spanned by the set of genomes $\{\textbf{z}\sigma : \sigma \in \HO_n\}$ (we can think of the basis as corresponding to the set of cosets $\HO_n/\D_n$). Similarly, the analogue of the double coset $\D_n \alpha \D_n$ is the algebra element $\textbf{z}\alpha\textbf{z}$. Formally, a rearrangement $\textbf{z}\alpha$ acts on a genome $\textbf{z}\sigma$ via multiplication, producing a weighted formal sum $(\textbf{z}\alpha) (\textbf{z}\sigma) = (\textbf{z}\alpha\textbf{z})\sigma$\  of genomes, which represents the possible outcomes of the rearrangement event, along with the probability of each outcome occurring. It is also helpful to view this expression as,
\[(\textbf{z}\alpha) (\textbf{z}\sigma) = \textbf{z}(\alpha(\textbf{z}\sigma)) = \text{all orientations of ($\alpha$ acting on (all orientations of the instance $\sigma$)))}.\]
Two rearrangements $\textbf{z}\alpha$ and $\textbf{z}\beta$ applied to the genome $\textbf{z}$ will produce the same distribution of outcomes (that is, they apply the same action) precisely when $\textbf{z}\alpha\textbf{z} = \textbf{z}\beta\textbf{z}$, or equivalently when $\beta \in \D_n\alpha\D_n$. This set is the set of permutations generated by left multiplication by $\D_n$ (representing flips and rotations of the resulting genome) and conjugation by $\D_n$ (allowing the rearrangement to occur at any position on the genome, since we have no privileged positions). We refer to a set of allowed rearrangement actions $\mathcal{M} = \set{\textbf{z}\alpha_1 \textbf{z}, ..., \textbf{z}\alpha_m \textbf{z}}$ along with a probability map $w : \mathcal{M} \rightarrow [0,1]$ as a \textit{model} of genome rearrangement.

\begin{remark}
We note that in a number of situations, for example our combinatorial analysis of rearrangements in \Cref{sec:cuts}, it is sufficient to simply think of rearrangements as double cosets, or more simply, sets of permutations which represent a given rearrangement, along with some number of flips or rotations applied before or after the rearrangement itself. While imagining genomes and rearrangements as formal sums belonging to some algebra (such as the genome algebra) allows us to easily keep track of model probabilities and the outcomes of rearrangement actions, as we will see in \Cref{sec:models}, fundamentally we are simply collecting all instances of the genome or rearrangement which we consider to be equivalent up to symmetry. 
\end{remark}

\section{Rearrangement models} \label{sec:models}
\begin{table}
    \centering
    \renewcommand{\arraystretch}{1.5}
    \begin{tabular}{P{4cm}P{3cm}P{7cm}}
         Model & References & Comments \\ \hline 
         \textit{All inversions} & \cite{Hannenhalli1996, Caprara1997, Bafna1993, Berard2007, Bafna1996, Bader2001, Baudet2015, Lin2008} & Used for `reversal distance' or `inversion distance'. When restricted to single-chromosome circular genomes, these are also the rearrangements arising from the `double-cut-and-join' operator (DCJ) \\
         \textit{Inversions of one or two regions} & \cite{Oliveira2019, Galvao2017} & ``Super short reversals" \\
         \textit{Inversions of one, two or three regions} & \cite{Terauds2018}* &  \\ 
         \textit{Inversions or transpositions of one or two regions} & \cite{Oliveira2018, Oliveira2019} & ``Super short operations" \\
         \textit{Transpositions of one or two regions} & \cite{Oliveira2019} & ``Super-short transpositions" \\
         \textit{All transpositions and inversions} & \cite{Bafna1998, Walter2005} & While transposition distance is less well-studied than inversion distance, a number of  \\
         \textit{Block-interchanges} &  \cite{Huang2010} & Swap two non-intersecting sections of some number of regions.\\
    \end{tabular}
    \caption{Some examples of classes of allowed rearrangements appearing in the literature. *Since only unsigned genomes were discussed in this work, single-region inversions leave the genome unchanged.}
    \label{tbl:models_in_literature}
\end{table}

It is widely accepted by biologists that some genomic rearrangement events are more likely than others \cite{Dalevi2002, Lefebvre2003, Darling2008, Alexeev2015} (although there has been much debate over where the most likely locations are for rearrangements to occur \cite{Peng2006, Pevzner2003}). Some rearrangement event modelling approaches allow for only a specific subset of rearrangements, while others are valid for any subset. One of the most commonly seen sets of rearrangements considered in the literature is the set of all inversions \cite{Bader2001, Watterson1982, Hannenhalli1996}. Some other commonly used sets of rearrangements are outlined in \Cref{tbl:models_in_literature}. In the position paradigm, we can give explicit algebraic characterisations for the instances of some specific types of rearrangements, as follows.

\begin{proposition} \label{prop:inversion}
    Fixing a position labelling for a genome with $n$ oriented regions, inversions of odd length $2k+1$ about a signed region in position $r+1$ can be written as the product of transpositions
    \[(r\!+\!1, \ncoverline{r\!+\!1\vphantom{)}})\prod_{i=1}^k \big(r\!-\!(i\!-\!1), \ncoverline{r\!+\!(i\!+\!1)}\big) \big(\ncoverline{r\!-\!(i\!-\!1)}, r\!+\!(i\!+\!1)\big) \in \HO_n,\]
    where all addition is $(\mathrm{mod}\ n)$. Inversions of even length $2k$, about a pair of regions in positions $(r,r+1)$ can be written
    \[\prod_{i=1}^k \big(r\!-\!(i\!-\!1), \ncoverline{r\!+\!i\vphantom{)}}\big)\big(\ncoverline{r\!-\!(i\!-\!1)}, r\!+\!i\big) \in \HO_n.\vspace{-8mm}\]
    \hfill\qed
\end{proposition}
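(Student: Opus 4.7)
The plan is to verify each formula by checking that the listed product of transpositions acts on $[\ncol{n}]$ in the way required by an inversion, namely that it reverses the relevant contiguous block of positions while flipping the orientation of each, and fixes everything else.

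First I would set up notation. In the position paradigm, an inversion is determined by which block of positions it affects and must satisfy two properties: (i) the block is reversed, so a position at distance $d$ from the centre of the block is sent to the position at distance $d$ on the opposite side; and (ii) each sign is flipped, encoding the reversed orientation. Since we are representing rearrangements inside $\HO_n$, the action on barred elements is forced by the extension rule $\sigma(\ncoverline{\,i\,}) = \ncoverline{\sigma(i)}$ from \Cref{eq:ext_perm}.

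For the odd case of length $2k+1$ centred at position $r+1$, the affected block is $\{r-(k-1),\ldots,r,r+1,r+2,\ldots,r+(k+1)\}$ (indices mod $n$). I would argue that for each $i \in \{1,\ldots,k\}$ the $i$-th pair of positions equidistant from the centre is $(r-(i-1),\,r+(i+1))$, and that the factor
\[\bigl(r-(i-1),\,\ncoverline{r+(i+1)}\bigr)\bigl(\ncoverline{r-(i-1)},\,r+(i+1)\bigr)\]
swaps these two positions while flipping both signs, as required. Meanwhile the central position $r+1$ must stay put but with reversed orientation, which is exactly the action of the remaining factor $(r+1,\ncoverline{r+1})$. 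Because for distinct values of $i$ the transpositions involve pairwise disjoint elements of $[\ncol{n}]$, all the factors commute, and the product therefore realises the inversion on the block while fixing every position outside it.

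The even case of length $2k$ spanning positions $(r,r+1)$ is essentially identical: there is no central fixed position, the $i$-th pair of equidistant positions is $(r-(i-1),\,r+i)$, and the same disjoint-commuting argument shows that the listed product performs the required swap-with-sign-flip on each pair. I would finish by noting that each expression does lie in $\HO_n$: this is automatic because the transpositions have been deliberately packaged in pairs of the form $(p,\ncoverline{q})(\ncoverline{p},q)$, together with the self-pair $(r+1,\ncoverline{r+1})$ in the odd case, which is exactly what enforces the extension rule. The only real obstacle is notational book-keeping: one must check carefully, using modular arithmetic on the indices, that no element of $[\ncol{n}]$ is repeated across two distinct transpositions, so that disjointness (and hence commutativity) truly holds.
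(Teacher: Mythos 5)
Your verification is correct: the factor indexed by $i$ is exactly the signed transposition exchanging the pair of positions at distance $i$ from the centre of the inverted block (with the sign flip encoding the reversed orientation), the factors are disjoint and hence commute, and pairing each $(p,\ncoverline{q})$ with $(\ncoverline{p},q)$ is precisely what guarantees membership in $\HO_n$. The paper states this proposition without proof, treating it as a routine check, and your argument is exactly the direct verification it leaves to the reader.
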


Of course, the set of all 2-inversions generates the symmetric group, and the collection of 1- and 2-inversions generates $\HO_n$. This of course means that any transposition, for example, can be written as a sequence of inversions. Once transpositions are added to the model as independent rearrangements, however, there is more to think about in terms of their assigned probability and further rearrangements which may need to be added as a result. For example, does it make sense to include a transposition, but not include the same transposition that also inverts the segment(s) involved? We will now discuss questions such as this one and provide some examples. In each example, we describe a set of possible rearrangements, along with a relative probability. This information can be combined to form a Markov chain in which the states are the genomes themselves, or used with some other method for computing rearrangement distance.

\subsection{Examples} \label{sec:examples}
Identifying biologically realistic models for genome rearrangements along with associated probabilities is still a work in progress. It is therefore important for any proposed framework to allow for a range of different models. The following examples will show some commonly studied models (for example `all inversions') in the genome algebra framework. We will discuss some potential pitfalls and important considerations which arise when defining a model and choosing rearrangement probabilities. 

\begin{example}(All inversions, all equally likely).
    Suppose we wish to construct a model equivalent to that used to compute `inversion distance' on circular genomes; that is, a model containing every inversion, with each inversion having an equal probability of occurring. See \Cref{fig:grah} for a visual representation of such a model.
    
    Begin by including the inversion represented by  $(1\ncol{1})$ (an inversion of the single region in position 1). The action of the rearrangement is the same as the action of $(2\ncol{2})$, for example, since the genome can be rotated or flipped before and after applying the rearrangement due to dihedral symmetry. As previously discussed, the sum of permutations $\zed (1\ncol{1}) \zed$ represents the single region inversion action, and thus accounts for all such inversions. Similarly, $\zed(1\ncol{2})(\ncol{1}2)\zed$ accounts for every inversion of 2 adjacent regions, and  $\zed(1\ncol{3})(2\ncol{2})(\ncol{1}3)\zed$ gives us all of the 3-region inversions. Care must be taken for larger inversions, however, since for a circular genome with $n$ regions, the $(n\!-\!m)$-region inversion action is the same as the $m$-region inversion action. If $n=6$, for example, the three inversions we have mentioned so far account for all inversions. If $n=5$, for equal probabilities we must remove the 3 region inversion $\zed(1\ncol{3})(2\ncol{2})(\ncol{1}3)\zed$ as it is equivalent to an inversion of two regions and would therefore skew the probability distribution of outcomes if included. The effect that this can have on subsequent calculations of genomic distance is illustrated in \citepair{Terauds2021}.
    
    It is important to note that the model $\mathcal{M} = \set{\zed (1\ncol{1}) \zed, \zed(1\ncol{2})(\ncol{1}2)\zed, \zed(1\ncol{3})(2\ncol{2})(\ncol{1}3)\zed}$ with an equal weighting of $\frac{1}{3}$ for each action will produce (via the process detailed by \citepair{Terauds2018}) a Markov chain in which some outcomes reachable via a single step will be more likely than others. This is because the action of inverting of $3$ regions has 3 possible outcomes, whereas each of the other inversion actions have 6. In other words, the sum $\zed(1\ncol{3})(2\ncol{2})(\ncol{1}3)\zed$ has fewer terms. The model for which \textit{all reachable genomes are equally likely} can be achieved via assigning the probability $\frac{2}{5}$ to the 1- and 2-region inversions, and $\frac{1}{5}$ to the 3-region inversion.
    
    Note that in total, there are $\floor{\frac{n}{2}}$ distinct inversion actions for a genome with $n$ regions. This is shown by \Cref{prop:num_inversions}.
\end{example}

\begin{example}(One and two region inversions).
A model containing small inversions of one or two regions can be defined with the two sums of permutations, $\zed (1\ncol{1}) \zed$ and $\zed(1\ncol{2})(\ncol{1}2)\zed$, as above. Since the number of regions $n$ is encoded in the symmetry element $\zed$, this model can be written in the same way for any number of regions $n > 2$. These rearrangement actions can be allowed to occur with equal probability by simply assigning $\frac{1}{2}$ to each action, or they can be given different weights. For example, there is some evidence of smaller inversions being more prevalent compared to larger inversions \cite{Dalevi2002, Lefebvre2003}, and this can be modelled by assigning the 1- and 2-region inversions the probabilities $\frac{2}{3}$ and $\frac{1}{3}$, for example. In this case, the 1-region inversion is twice as likely to occur.
\begin{figure}
    \centering
    \includegraphics[width=0.6\linewidth]{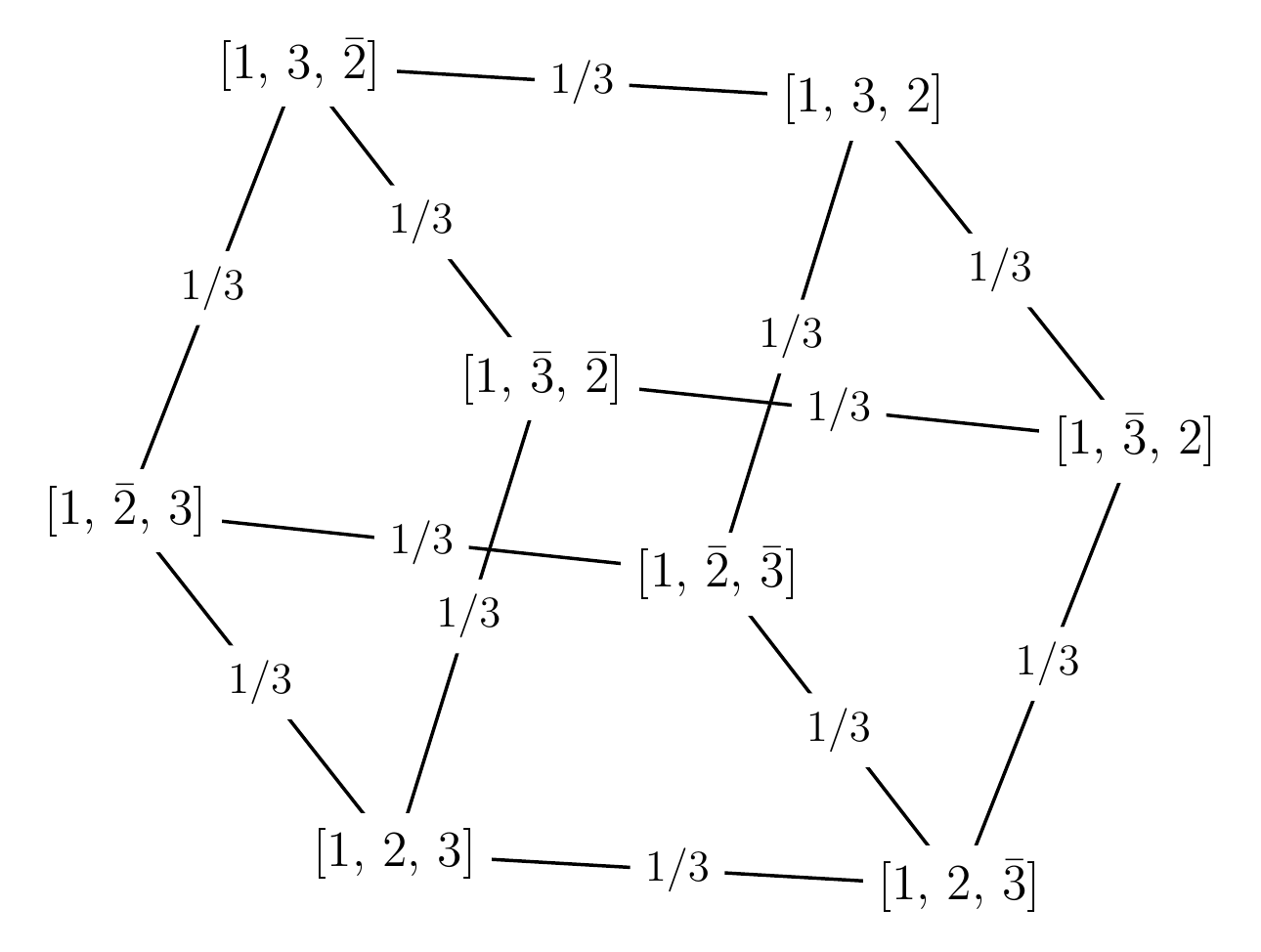}
    \caption{A graph representation of the Markov chain for a model of inversions on circular genomes with 3 oriented regions. Lines between genomes show the probability of a rearrangement which transforms one into the other. The minimum number of steps needed to traverse the graph from one genome to another is the `inversion distance'. The edge weights are the same for any choice of probability mapping $w$, since every inversion of two regions can be written as an inversion of a single region.}
    \label{fig:my_label}
\end{figure}
\end{example}

\begin{example}(One and two region inversions, and two region adjacent transpositions).
Suppose we have a model with small (1- and 2-region) inversions as in the previous examples. We can add adjacent transpositions (rearrangements that swap two regions) by including the additional element $\zed (12)(\ncol{1}\ncol{2}) \zed$. In this situation, it might make sense to also allow inverting one or both of these swapped regions in a single step. The action $\zed (12\ncol{1}\ncol{2}) \zed$ swaps regions in adjacent positions, while simultaneously inverting the second region (the action that inverts the first region is equivalent due to symmetry). In this case, one should consider a weighting for this new action that makes sense considering the other weightings. 

For example, if $w(\zed (1\ncol{1}) \zed) = a$ and $w(\zed (12)(\ncol{1}\ncol{2}) \zed) = b$, then it might make sense to require that $w(\zed (12\ncol{1}\ncol{2}) \zed) \geq ab$. One can think of this property as the triangle inequality for a rearrangement model.
\end{example}

\begin{example}
Another commonly considered class of rearrangement events is the collection of \textit{transpositions}. These are rearrangements which cut a segment from the genome and move it to a new location and/or orientation. In the previous example, $\zed (12\ncol{1}\ncol{2}) \zed$ and $\zed (12)(\ncol{1}\ncol{2})\zed$ were examples of transpositions of adjacent regions. Another permutation representing a transposition is $(132)(\ncol{1}\ncol{3}\ncol{2})$, which takes the region in position 1 and moves it to position 3. More generally, the action $\textbf{z}(132)(\ncol{1}\ncol{3}\ncol{2})\textbf{z}$ takes a region and moves it to the location two regions further along the genome. Of course, there are many more transposition actions. For $n=5$, we have 6 such actions: two correspond to moving a single region 1 or 2 places along the genome, and for each of these, we can invert the region being moved, or invert one of the other segments, or both the single region and one of the segments. Of course, when there are only $5$ regions, moving a segment of two regions along the genome is equivalent to moving a single region. If inversions are included in the model as single-step rearrangements, one might want to remove transpositions which invert one of the segments. This can be done by simply removing rearrangements with cycles that contain positive and negative region labels (for example $\zed(12\ncol{1}\ncol{2})\zed$). We will examine these rearrangements more closely in \Cref{sec:cuts}, where we count the distinct rearrangements of this type.
\end{example}

\section{Cuts and breakpoints} \label{sec:cuts}
While the examples in the previous section provide some intuition for defining genome rearrangement models, it can be useful to formally describe possible rearrangements in order to develop a clearer picture. In this section, we consider the categorisation of rearrangements via the positions of cuts they make in the genome, and find expressions for the numbers of distinct rearrangements occurring via two and three cuts. For example, given a position labelling, the permutation $(2\ncol{3})(\ncol{2}3)$ represents an instance of an inversion which cuts the genome in two places. That is, between positions 1 and 2 and positions 3 and 4. Formally, we introduce the concept of the set of cuts applied by a permutation.

\begin{definition}  \label{def:cuts}
    For $\alpha \in \HO_n$, the \textit{set of cuts applied by $\alpha$} is given by, \vspace{-2mm}
    \[ \mathrm{cuts}(\alpha) := \set{\ i\in [n]\ : \text{ the action of $\alpha$ cuts a genome between positions $i$ and $i\!+\!1 \ (\text{mod}\ n)$ }}.\vspace{-0.7cm}\] 
\end{definition}

This idea is similar to the important (although somewhat overloaded, with widely differing definitions) term, \textit{breakpoint}. In short, the breakpoints are the locations on the genome at which rearrangements can or have occurred. We will discuss breakpoints and the similarities to \Cref{def:cuts} later in this section.

Recognising that for a rearrangement instance $\alpha$, the set of cuts is the set of positions which $\alpha$ places `out-of-order' when applied to the identity genome, we have the following,

\begin{proposition}  \label{prop:cut_set}
    For $\alpha \in \HO_n$, the \textit{set of cuts} applied by a rearrangement instance $\alpha$ is given by,
    \[ \mathrm{cuts}(\alpha) := [n] - \mathrm{Fix}([\alpha,c]) \]
    where $[\alpha,c] := \alpha^{-1} c^{-1}\alpha c \in \HO_n$ is \textit{the commutator of} $\alpha$ \textit{and} $c$, and \\$\mathrm{Fix}([\alpha,c])\!=\!\set{i\!\in\![n] : [\alpha,c](i)\!=\!i}$, recalling that $c=(1,...,n)(\ncol{n},...,\ncol{1}) \in \Sy_{[\ncol{n}]}$.
\end{proposition}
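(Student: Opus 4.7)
The plan is to translate the informal notion of ``$\alpha$ cuts between positions $i$ and $i\!+\!1$'' into a single algebraic equation, and then to recognise that equation as the fixity of $[\alpha,c]$ at $i$.

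First I would make precise the suggestion given in the sentence before the statement via an explicit adjacency criterion. Applying $\alpha$ to the identity genome sends the region originally at position $i$ to $\alpha(i)$ and the region originally at position $i\!+\!1$ to $\alpha(i\!+\!1)$. No cut has been made between positions $i$ and $i\!+\!1$ precisely when these two (originally adjacent) regions remain cyclically adjacent with the same relative orientation in the rearranged genome. Because $c = (1\,2\cdots n)(\ncol{n}\cdots\ncol{1})$ is the ``next signed position'' permutation on $[\ncol{n}]$---encoding both the cyclic advance of positions on $[n]$ and the reversed advance on the barred copy---this no-cut condition at $i$ is exactly
\[ \alpha(i\!+\!1) \;=\; c(\alpha(i)), \qquad\text{equivalently}\qquad \alpha c(i) \;=\; c\alpha(i). \]
I would sanity-check the criterion against the example $\alpha=(2\ncol{3})(\ncol{2}3)$ discussed just before the statement: at $i=2$, $c(\alpha(2)) = c(\ncol{3}) = \ncol{2} = \alpha(3)$ so no cut is made between positions $2$ and $3$, while the equation fails at $i=1$ and $i=3$, reproducing the claimed cut set $\{1,3\}$.

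The second step is a one-line commutator manipulation: from $[\alpha,c](i) = \alpha^{-1}c^{-1}\alpha c(i) = i$, applying $c$ and then $\alpha$ on the left to both sides gives the equivalent statement $\alpha c(i) = c\alpha(i)$, and conversely any such $i$ is fixed by $[\alpha,c]$. Combined with the criterion above, $i \in \mathrm{Fix}([\alpha,c])$ if and only if $\alpha$ does not cut between positions $i$ and $i\!+\!1$; taking complements inside $[n]$ yields $\mathrm{cuts}(\alpha) = [n] - \mathrm{Fix}([\alpha,c])$.

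The only non-routine work is the adjacency criterion: the informal \Cref{def:cuts} of a ``cut'' has to be pinned down to the sharp statement that regions originally adjacent at $i,i\!+\!1$ remain signed-adjacent after $\alpha$ acts, and the reader must be convinced that $c$ is the correct encoding of signed adjacency (which is built into its definition as a product of one $n$-cycle on $[n]$ and one on the barred copy). Once this conceptual step is settled the commutator computation is immediate, so I expect the write-up to be short but to spend most of its words justifying the criterion rather than manipulating formulas.
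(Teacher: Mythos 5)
Your proposal is correct and follows essentially the same route as the paper: both pin down the no-cut condition at $i$ as the signed-adjacency equation $c\alpha(i)=\alpha c(i)$ and then rearrange it into the fixity of the commutator $[\alpha,c]$ at $i$, taking complements in $[n]$. The only difference is cosmetic---you add a worked sanity check on $(2\ncol{3})(\ncol{2}3)$, which the paper instead supplies as a separate example after the proposition.
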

\begin{proof}
    Consider $\alpha \in \HO_n$. We have $i \in \cuts(\alpha)$ if and only if the region following $\alpha(i)$ is different from the region $\alpha(i\!+\!1\ (\mathrm{mod}\ n)) = \alpha c (i)$. That is, the regions in positions $i$ and $j=c(i)$ are out of order with respect to the identity.
    We note that if $i \in \cuts(\alpha)$ then $(i,j)$ is a pair of consecutive labels in $\alpha$. The permutation $c$ tells us the next position on the circular genome. Formally, we can write,
    \begin{align*}
        \cuts(\alpha) &= \{ i \in [n] : c\alpha(i) \neq \alpha c (i) \} \\
        &= \set{i \in [n] : i \neq \alpha^{-1}c^{-1}\alpha c(i)} \\
        &= \left\{ i \in [n] : i\notin\mathrm{Fix}([\alpha,c])\right\} \\
        &= [n] - \mathrm{Fix}([\alpha,c])
    \end{align*}
    as required.
\end{proof}

\begin{example}
    As in \Cref{prop:inversion}, the permutation $\alpha = (3\ncol{3})(2\ncol{4})(\ncol{2}4)$ represents a rearrangement that cuts a genome in two positions to invert the 3-region segment spanning positions 2, 3 and 4. We have
    \begin{align*}  
        [\alpha, c] &= \alpha^{-1}c^{-1}\alpha c \\
        & = (3\ncol{3})(2\ncol{4}) (\ncol{2}4) (\ncol{1}\ncol{2}\ncol{3}\ncol{4}\ncol{5})(54321)(3\ncol{3})(2\ncol{4}) (\ncol{2}4) (12345)(\ncol{5}\ncol{4}\ncol{3}\ncol{2}\ncol{1}) \\
        & = (1\ncol{5})(4\ncol{2}) 
    \end{align*}
    which fixes $2, 3, 5, \ncol{1}, \ncol{3}$ and $\ncol{4}$. Thus $\cuts(\alpha) = [n] - \mathrm{Fix}([\alpha,c]) = \{1,4\}$, as expected since $\alpha$ cuts the genome before position 1 and after position 4.
\end{example}

Note that the expression given in \Cref{prop:cuts_set} is very similar to the expression for the number of breakpoints derived by \citeauthor{Meidanis2000} \cite{Meidanis2000}. In fact, breakpoints are a similar concept, but are typically sets of regions rather than sets of positions. We will obtain an equivalent expression for the number of breakpoints in \Cref{sec:breakpoints}, using a similar construction to the cuts set.

As discussed in \Cref{sec:ER}, in our formulation which incorporates circular symmetry, rearrangements aren't defined to be single permutations, but rather cosets or formal sums. The following result shows that all instances of a given rearrangement produce the same cut set, and thus the cut set is well-defined on the set of rearrangements. We first need the following definition.

\begin{definition} \label{def:ccuts}
     For all permutations $\alpha \in \HO_n$, denote \[\mathrm{Cuts}(\alpha) := \cuts(\alpha) \cup \ncol{\cuts(\alpha)},\]
     where $\ \ncol{\cuts(\alpha)} := \set{\ncol{\,i\,} : i \in \cuts(\alpha)}$.
\end{definition}

Simply put, \Cref{def:ccuts} describes a set which includes \textit{both} orientations of the positions in the cut set $\cuts(\alpha)$. If we take $\alpha$ from Example 5.3 above, we get $\mathrm{Cuts}(\alpha) = \{1,4,\ncol{1},\ncol{4}\}$. This idea allows us to more concisely state the following result.

\begin{proposition} \label{prop:cuts_set}
    For all permutations $\alpha \in \HO_n$ and $d_1,d_2\in \mathcal{D}_n$, we have,
    \[ \mathrm{Cuts}(d_1\alpha d_2) = \{ d_2^{-1}(i) : i \in \mathrm{Cuts}(\alpha) \} =:  d_2^{-1}\cdot \mathrm{Cuts}(\alpha),\]
    Further,
    \[ \cuts(d_1\alpha d_2) = \paren{d_2^{-1}\cdot \mathrm{Cuts}(\alpha)} \cap [n],\]
    That is, if two rearrangements have the same action, then their cut sets are dihedrally related. In particular,
    \[ \cuts(d_1\alpha) = \cuts(\alpha).\]
\end{proposition}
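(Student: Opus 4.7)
The plan is to leverage the commutation relation $dc = cd$ for every $d \in \mathcal{D}_n$, combined with the characterisation of cuts from \Cref{prop:cut_set}, to directly compute how $\cuts$ and $\mathrm{Cuts}$ transform under the map $\alpha \mapsto d_1 \alpha d_2$.

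First I would prove that each generator of $\mathcal{D}_n$ commutes with $c \in \Sy_{[\ncol{n}]}$. For the rotation $r = (1\,2\,\ldots\,n)(\ncoverline{1}\,\ldots\,\ncoverline{\vphantom{1}n})$, the identity $rc = cr$ is a short direct calculation on both $[n]$ and $\ncol{[n]}$. For the reflection $f$, the key observation is that $f$ reverses orientation on the circle while $c$'s definition (cycling forward on $[n]$ and backward on $\ncol{[n]}$) is precisely what makes $c$ invariant under conjugation by $f$; verifying $fcf = c$ is straightforward by writing $f$ in one-row form and tracing through the definitions. Since $\mathcal{D}_n = \langle r, f\rangle$, every $d \in \mathcal{D}_n$ commutes with $c$.

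The last sentence $\cuts(d_1 \alpha) = \cuts(\alpha)$ then follows almost immediately:
\[i \in \cuts(d_1\alpha) \iff c d_1 \alpha(i) \neq d_1 \alpha c(i) \iff d_1 c \alpha(i) \neq d_1 \alpha c(i) \iff c \alpha(i) \neq \alpha c(i) \iff i \in \cuts(\alpha),\]
using $c d_1 = d_1 c$ and then cancelling the bijection $d_1$. For the full formula, the same manipulation on $\cuts(d_1 \alpha d_2)$ (now using both $c d_1 = d_1 c$ and $d_2 c = c d_2$) yields
\[i \in \cuts(d_1 \alpha d_2) \iff c \alpha(d_2(i)) \neq \alpha c(d_2(i)),\]
so $i \in \cuts(d_1 \alpha d_2)$ precisely when $d_2(i) \in [\ncol{n}]$ satisfies the natural extension of the cut condition from $[n]$ to $[\ncol{n}]$.

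The main obstacle is matching this extended condition with membership in $\mathrm{Cuts}(\alpha)$. When $d_2(i) \in [n]$ this is the statement of \Cref{prop:cut_set}. The non-trivial case is when $d_2(i) = \ncol{j} \in \ncol{[n]}$: one must show that $c\alpha(\ncol{j}) \neq \alpha c(\ncol{j})$ is equivalent to $j \in \cuts(\alpha)$, which requires expanding via $\alpha(\ncol{j}) = \ncol{\alpha(j)}$ (from $\alpha \in \HO_n$) and the action of $c$ on $\ncol{[n]}$, then carefully matching with the $[n]$-condition for $\alpha$. Once established, this gives $\cuts(d_1\alpha d_2) = d_2^{-1}(\mathrm{Cuts}(\alpha)) \cap [n]$. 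Applying $\mathrm{Cuts}(\beta) = \cuts(\beta) \cup \ncol{\cuts(\beta)}$ to both sides, together with the fact that $d_2^{-1} \in \HO_n$ commutes with the bar map, then delivers the main identity $\mathrm{Cuts}(d_1\alpha d_2) = d_2^{-1} \cdot \mathrm{Cuts}(\alpha)$.
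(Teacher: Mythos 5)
Your overall route is the same as the paper's: use the fact that $c$ commutes with $\mathcal{D}_n$ to move $d_1$ out of the condition, then substitute $l=d_2(i)$. Checking the commutation on the generators $r$ and $f$ is a reasonable extra step (the paper simply asserts it). However, the step you yourself flag as ``the non-trivial case'' is resolved incorrectly, and this is a genuine gap. You claim that for $d_2(i)=\ncol{j}$ one must show $c\alpha(\ncol{j})\neq\alpha c(\ncol{j})$ is equivalent to $j\in\cuts(\alpha)$. Carrying out the expansion you describe gives something different: writing $\beta$ for the bar involution, one has $c\beta=\beta c^{-1}$ and $\alpha\beta=\beta\alpha$, so $c\alpha(\ncol{j})\neq\alpha c(\ncol{j})$ is equivalent to $c^{-1}\alpha(j)\neq\alpha c^{-1}(j)$, i.e.\ to $c^{-1}(j)=j-1\in\cuts(\alpha)$, \emph{not} to $j\in\cuts(\alpha)$. (Check against the paper's example $\alpha=(3\ncol{3})(2\ncol{4})(\ncol{2}4)$: the non-fixed points of $[\alpha,c]$ are $\{1,4,\ncol{2},\ncol{5}\}$, so the barred half is $\ncol{c(\cuts(\alpha))}=\{\ncol{2},\ncol{5}\}$ rather than $\ncol{\cuts(\alpha)}=\{\ncol{1},\ncol{4}\}$.) Your final assembly via $\mathrm{Cuts}(\beta)=\cuts(\beta)\cup\ncol{\cuts(\beta)}$ therefore does not go through.

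You should not feel too bad about this, because the mismatch is inherited from the paper: the paper's own proof opens by writing $\mathrm{Cuts}(d_1\alpha d_2)=\set{i\in[\ncol{n}]:cd_1\alpha d_2(i)\neq d_1\alpha d_2c(i)}$, which silently replaces Definition~\ref{def:ccuts} by the set of \emph{all} non-fixed points of the commutator in $[\ncol{n}]$; these two sets differ by the shift $c$ on the barred half. With Definition~\ref{def:ccuts} taken literally the displayed identity $\cuts(d_1\alpha d_2)=\paren{d_2^{-1}\cdot\mathrm{Cuts}(\alpha)}\cap[n]$ can actually fail: for $n=5$, $\alpha=(1\ncol{1})$, $d_1=e$, $d_2=f$, one gets $\cuts(\alpha f)=\{1,5\}$ but $\paren{f^{-1}\cdot\{1,5,\ncol{1},\ncol{5}\}}\cap[5]=\{1,2\}$. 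The repair --- which makes both the proposition and your argument work --- is to encode the cut between positions $i$ and $i+1$ by the pair $\{i,\ncol{i+1}\}$, i.e.\ to set $\mathrm{Cuts}(\alpha):=\set{i\in[\ncol{n}]:c\alpha(i)\neq\alpha c(i)}=\cuts(\alpha)\cup\ncol{c(\cuts(\alpha))}$; with that definition the substitution $l=d_2(i)$ finishes the proof exactly as in the paper, and the special case $\cuts(d_1\alpha)=\cuts(\alpha)$ (which you do prove correctly and directly) is unaffected.
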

\begin{proof}
    Take some $\alpha \in \HO_n$ and $d_1,d_2\in \mathcal{D}_n$. Then,
    \begin{align*}
        \mathrm{Cuts}(d_1\alpha d_2)
        &= \left\{ i\in [\ncol{n}] : \ cd_1\alpha d_2(i) \neq d_1\alpha d_2c(i)\right\} \\
        &= \left\{ i\in  [\ncol{n}] : \ d_1^{-1}cd_1\alpha d_2(i) \neq \alpha d_2c(i)\right\} \\
        &= \left\{ i\in  [\ncol{n}] : \ c\alpha d_2(i) \neq \alpha c d_2(i)\right\} \tag*{\hspace{-1cm}$\because$\ \ $c$ commutes with all of $\mathcal{D}_n$}\\
        &= \left\{ d_2^{-1}(l) : l \in [\ncol{n}] \text{ and } c\alpha(l) \neq \alpha c (l) \right\} \tag*{\hspace{-1cm}writing $l := d_2(i)$} \\
        &= d_2^{-1}\cdot \mathrm{Cuts}(\alpha) .
    \end{align*}
    The second result is a simple application of the first, setting $d_2$ equal to the identity.
\end{proof}

Using the cuts set, we can now conveniently express sets of rearrangements in terms of the number of cuts applied by each rearrangement.

\begin{center}\begin{minipage}{\linewidth}
\begin{lemma} \label{lem:sets_of_rearrangements}
The cuts set describes the following classes of rearrangements.
\begin{enumerate}[(i) ]
    \setlength\itemsep{-1em}
    \item Inversions of some number of adjacent regions:
        \vspace{-3mm} \[\mathrm{SignedInversions}(n) := \{ \textbf{z}\alpha \in \mathbb{C}[\HO_n] : |\mathrm{cuts}(\alpha)| = 2\}\]
    \item Transpositions, including where one or more of the segments are inverted:
        \vspace{-3mm} \[\mathrm{SignedTranspositions}(n) := \{ \textbf{z}\alpha \in \mathbb{C}[\HO_n] : |\mathrm{cuts}(\alpha)| = 3 \}\]
    \item Block interchanges (two segments are swapped and/or inverted):
        \vspace{-3mm} \[\mathrm{SignedBlockInterchanges}(n) := \{ \textbf{z}\alpha \in \mathbb{C}[\HO_n] : |\mathrm{cuts}(\alpha)| \in \{3,4\} \}\]
\end{enumerate}
\hfill\qed
\end{lemma}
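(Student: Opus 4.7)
The plan is to prove each characterisation in two directions: forward (members of the named rearrangement class have the stated cut count) and converse (rearrangements with the stated cut count belong to the named class). By \Cref{prop:cuts_set} we have $\cuts(d_1\alpha) = \cuts(\alpha)$ for every $d_1 \in \mathcal{D}_n$, so $|\cuts(\alpha)|$ depends only on the genome algebra element $\mathbf{z}\alpha$, and it suffices to work with a single permutation representative of each rearrangement action.

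For the forward directions I would compute $\cuts(\alpha)$ directly using \Cref{prop:cut_set} applied to canonical representatives. For the inversion instance in \Cref{prop:inversion}, a commutator calculation in the style of the worked example following \Cref{prop:cut_set} shows that $[\alpha,c]$ fixes every position except the two endpoints of the inverted segment, giving $|\cuts(\alpha)|=2$. For a transposition such as the cyclic move $(132)(\ncol{1}\ncol{3}\ncol{2})$ from \Cref{sec:examples}, the analogous computation yields $|\cuts(\alpha)|=3$, corresponding to the two boundary cuts of the moved segment and one cut at the insertion site; optionally inverting one or both of the involved segments changes the signed cycles of the commutator but not the cardinality of $\cuts(\alpha)$. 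For a block interchange I would split into two sub-cases: when the two swapped segments are non-adjacent, four distinct boundary cuts appear, giving $|\cuts(\alpha)|=4$; when they are adjacent, one boundary is shared and the count drops to $3$.

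For the converses, the key observation from the proof of \Cref{prop:cut_set} is that $i \notin \cuts(\alpha)$ is equivalent to $c\alpha(i) = \alpha c(i)$, so $\alpha$ preserves the cyclic adjacency between positions $i$ and $i+1$. A rearrangement with $|\cuts(\alpha)|=k$ therefore partitions the genome into exactly $k$ oriented arcs, each mapped as an intact block (possibly reversed) onto an arc in the rearranged genome. The converse in each case then reduces to enumerating the ways that $k$ oriented arcs can be glued back into a circle, modulo the dihedral symmetry absorbed by the $\mathbf{z}$ action: for $k=2$ only reversing one of the two arcs is non-trivial, giving an inversion; for $k=3$ any non-identity cyclic reordering with arbitrary arc orientations yields a signed transposition; and for $k=4$ the non-trivial reassemblies not already realisable with fewer cuts correspond to interchanging two non-adjacent segments (possibly with inversions), i.e.\ signed block interchanges.

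The main obstacle is the combinatorial bookkeeping in the converse direction. One must check that every candidate gluing genuinely realises $k$ cuts---a reassembly that happens to preserve one original adjacency is a $(k-1)$-cut rearrangement in disguise and should be excluded---and verify that no classes appear beyond the claimed ones. The subtlest point is (iii), where the $k=3$ and $k=4$ strata coexist: one must confirm that every 3-cut block interchange coincides with an adjacent-segment transposition, so that $\mathrm{SignedBlockInterchanges}(n)$ decomposes as $\mathrm{SignedTranspositions}(n)$ together with the genuinely 4-cut actions.
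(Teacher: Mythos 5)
The paper states this lemma with no proof at all---the terminal tombstone and the ``$:=$'' signal that the cut-count conditions are essentially being \emph{taken as definitions} of these classes, calibrated only against the informal reassembly pictures in \Cref{fig:counted} and \Cref{fig:not_counted}. Your outline therefore supplies an argument where the paper supplies none, and the strategy is the natural one: well-definedness of $|\cuts(\alpha)|$ on actions via \Cref{prop:cuts_set}, the forward inclusions by computing $[\alpha,c]$ on canonical representatives, and the converses by observing that $|\cuts(\alpha)|=k$ decomposes the genome into $k$ arcs carried intact (possibly reversed), then enumerating the circular reassemblies that destroy all $k$ original adjacencies. Your exclusion criterion (a reassembly preserving an original adjacency is a $(k-1)$-cut rearrangement in disguise) is exactly what reduces the $2^{k-1}(k-1)!$ raw reassemblies to the counts $1$, $4$, $25$ per cut set recorded in \Cref{tbl:cuts}, and for $k=2$ your conclusion is immediate. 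For $k=3$ it also checks out, but only under a generous reading of ``transposition'': one of the four $3$-cut reassemblies is, up to a flip of the whole genome, the adjacent swap with the \emph{complementary} segment inverted, so ``one or more of the segments are inverted'' must be understood to include that third segment. This is worth making explicit.

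The genuine gap is the converse of (iii) for $k=4$. You assert that the genuinely $4$-cut reassemblies ``correspond to interchanging two non-adjacent segments (possibly with inversions)'', but \Cref{tbl:cuts} records $25$ distinct reassemblies per $4$-cut set, and interchanges of the two alternate arc pairs with optional inversions (together with in-place inversions of two non-adjacent arcs) account for far fewer than $25$; the remainder include reassemblies that simultaneously reorder and reorient three of the four arcs, which are not block interchanges under the definition the paper cites (``swap two non-intersecting sections''). So either (iii) must be read purely as a definition, justified only in the forward direction---which appears to be the paper's intent---or your converse needs an explicit check of all $25$ reassembly types against whatever notion of block interchange is adopted; your sketch does not supply that check and, as stated, the claim is false for the narrow definition.
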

\end{minipage}\end{center}

\subsection{How many rearrangements \texorpdfstring{$\textbf{z}\sigma$}{(cosets)} perform \textit{k} cuts?}

\begin{table}
\begin{center}
    \setlength{\tabcolsep}{2.5pt}
    \begin{tabular}{ c|cc|cccccccc } 
      & & & \multicolumn{8}{|c}{\# $\mathcal{D}_n\alpha$ performing $k$ cuts} \\
      n & $|\HO_n|$ & \# $\mathcal{D}_n\alpha$ & 0 & 1 & 2 & 3 & 4 & 5 & 6 & 7\\
      \hline
      2 & 8 & 2 & (1, 1) & (0, 0) & (1, 1) &  &  &  &  & \\ 
      3 & 48 & 8 & (1, 1)& (0, 0) & (3, 3) & (1, 4) &  &  &  & \\
      4 & 384 & 48 & (1, 1) & (0, 0)& (6, 6) & (4, 16) & (1, 25) &   &   &  \\
      5 & 3840 & 384 & (1, 1) & (0, 0)& (10, 10) & (10, 40) & (5, 125) & (1, 208) &   &  \\
      6 & 46080 & 3840 & (1, 1)& (0, 0) & (15, 15) & (20, 80) & (15, 375) & (6, 1248) & (1, 2121) &  \\
      7 & 645120 & 46080 & (1, 1)& (0, 0) & (21, 21) & (35, 140) & (35, 875) & (21, 4368) & (7, 14847) & (1, 25828) \\
     \hline
    \end{tabular}
    \caption{The number of distinct rearrangements $\mathcal{D}_n\alpha$ with a given number of cuts. For each tuple $(p,q)$ in the table (corresponding to a fixed $n$ and $k$), $p = \binom{n}{k}$ represents the number of distinct cut-sets among the rearrangements and $q$ the total number of rearrangements.}
    \label{tbl:cuts_single_cosets}
\end{center}
\end{table}

\begin{table}
\begin{center}
    \setlength{\tabcolsep}{4pt}
    \begin{tabular}{ c|cc|cccccccc } 
      & & & \multicolumn{8}{|c}{\# $\mathcal{D}_n\alpha$ per cut set of size $k$} \\
      n & $|\HO_n|$ & \# $\mathcal{D}_n\alpha$ & 0 & 1 & 2 & 3 & 4 & 5 & 6 & 7\\
      \hline
      2 & 8 & 2 & 1 &0 & 1 & &  &  & & \\
      3 & 48 & 8 & 1 &0 & 1 & 4  & &  & & \\
      4 & 384 & 48 & 1 &0 & 1 & 4 & 25  &  & & \\
      5 & 3840 & 384 & 1 &0 & 1 & 4 & 25 & 208  & & \\
      6 & 46080 & 3840 & 1 &0 & 1 & 4 & 25 & 208 & 2121 & \\
      7 & 645120 & 46080 & 1 &0 & 1 & 4 & 25  & 208 & 2121 & 25828 \\
     \hline
    \end{tabular}
    \caption{Number of distinct rearrangements for each distinct cut-set of size $k$ (q/p in Table 3)}
    \label{tbl:cuts}
\end{center}
\end{table}

The cuts set can be a useful way of thinking about rearrangements. In \Cref{tbl:cuts_single_cosets}, we count the number of rearrangements which perform a given set of $k$ cuts (that is, for each $k$ we count the number of cut-sets $S$ of size $k$ and then count the number of rearrangements $\D_n \alpha$ such that $\cuts(\D_n \alpha) = S$). We note that for each $(p,q)$ entry in \Cref{tbl:cuts_single_cosets}, we must have $\frac{q}{p}$ distinct rearrangements for each particular cut-set, since the number of ways to reconnect a genome cut into $k$ segments is independent of the positions of the cuts and the region labels (that is, the number of rearrangements with a given cut-set is dependent only on the size of that cut-set). In \Cref{tbl:cuts}, we collect $\frac{q}{p}$ for each entry in \Cref{tbl:cuts_single_cosets}.

We can observe a number of patterns emerging in \Cref{tbl:cuts_single_cosets}. First, as one would expect, it seems that the number of distinct cut-sets for rearrangements on $n$ regions is given by $\binom{n}{k}$ where $k$ is the number of cuts. The total number of rearrangements which perform $k$ cuts is less obvious, however we can see from \Cref{tbl:cuts_single_cosets} that the number of rearrangements for a given set of $k$ cuts ($\frac{q}{p}$ in the table) is independent of $n$, and begins as \[
  1,0,1,\frac{140}{35}=4,\frac{875}{35}=25,\frac{4368}{21}=208,\frac{14847}{7}=2121,\frac{25828}{1} = 25828,...\\
\]
coinciding with the sequence A$061714$ in the \textit{Online Encyclopedia of Integer Sequences} \cite{OEISFoundation2020}. For small $n$, we can think about how we might count the number of rearrangements which correspond to a given cut set. Consider $k=3$, and consider some fixed set of cuts. Such a rearrangement will split a genome into three parts. The rearrangement will re-connect these segments in one of four different ways:
\begin{figure}[H]
    \centering
    \begin{center}\includegraphics[width=310px]{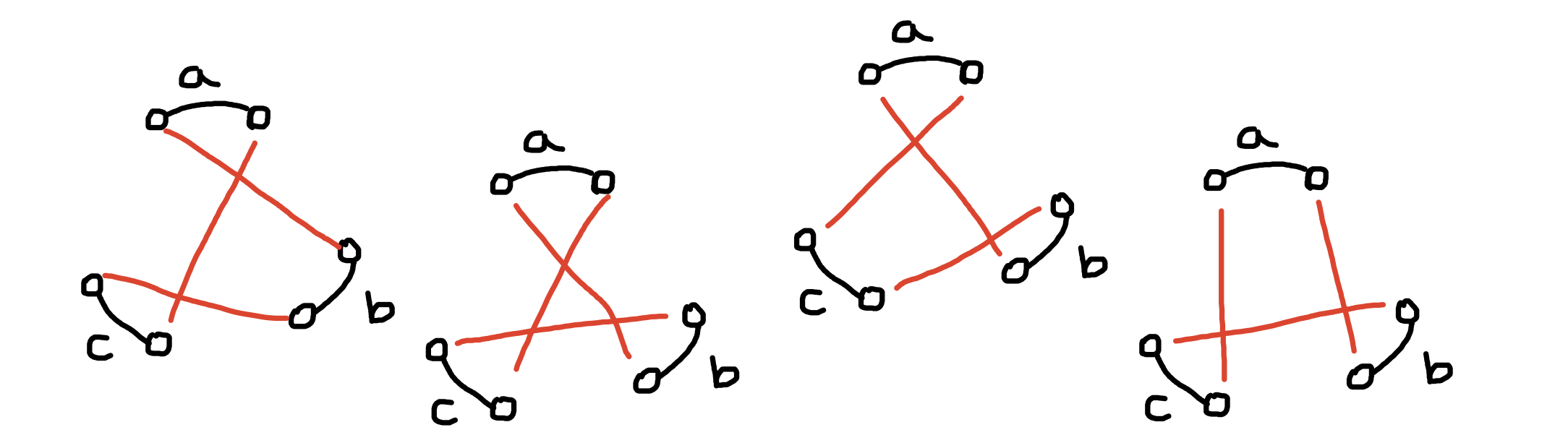}\end{center}
    \vspace{-4mm}
    \caption{Different ways to reconnect three labelled segments to perform a rearrangement which requires 3 cuts.}
    \label{fig:counted}
\end{figure}
\noindent Of course, there are other ways to reconnect these segments, for example the graphs in \Cref{fig:not_counted}.
\begin{figure}[H]
    \centering
    \begin{center}\includegraphics[width=240px]{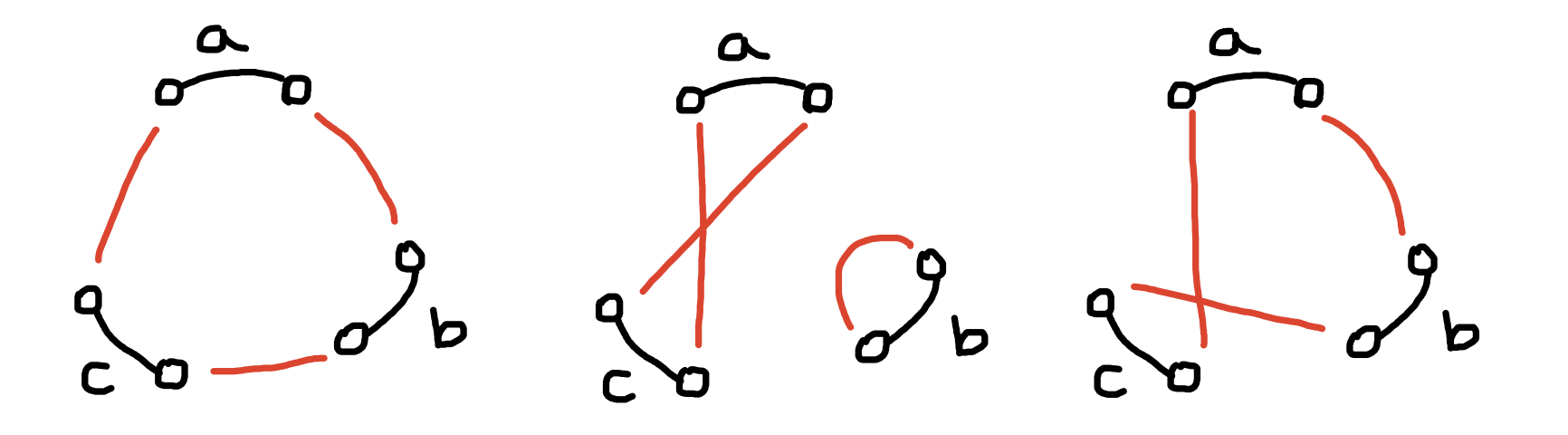}\end{center} \vspace{-4mm}
    \caption{From left to right: diagrams representing the identity rearrangement, a fission, and a rearrangement performing 2 cuts.}
    \label{fig:not_counted}
\end{figure}
\noindent However, these aren't counted for various reasons. A rearrangement matching the first picture is simply the identity rearrangement. Similarly, the last picture represents a rearrangement which can be performed with just two cuts, rather than three. Finally, the middle picture represents what is referred to as a fission, which for now we do not consider. If one includes examples such as the left and right images above, it is easy to see that we obtain 8 possible rearrangements ($8=2^{3-1}(3-1)!$) which is of course the total number of genomes with $3$ regions. The sequence A061714 however is more complicated, with the $k$-th term given by:
\[\mathrm{A061714}(k) = (-1)^{k}+\sum_{i=0}^{k-1}(-1)^{k-1+i}\left(\begin{array}{c}k \\ i+1\end{array}\right) i!\  2^{i}, \quad \text{ for } k \geq 2.\]
To see that this sequence gives the number of rearrangements $\textbf{z}\alpha$ with a given $k$-cut set in general, note that \citeauthor{Helsgaun2009} \cite{Helsgaun2009} identifies the sequence $\mathrm{A061714}$ with the number of \textit{pure $k$-opt moves} which, in the context of the travelling salesman problem implements the following: remove $k$ edges from a tour, and replace them with $k$ different ones. In fact, one can observe that the example figures describing such moves in \cite{Helsgaun2009} are quite similar to our figures above. We can of course now write the number of rearrangements which perform $k$ cuts.
\begin{proposition}
    The number of rearrangements $\D_n \alpha \in \HO_n/\D_n$ such that $|\cuts(\alpha)| = k$ is
    \[{\binom{n}{k}}\mathrm{A061714}(k) \ =\  {\binom{n}{k}} (-1)^{k}+ {\binom{n}{k}}\sum_{i=0}^{k-1}(-1)^{k-1+i}\left(\begin{array}{c}k \\ i+1\end{array}\right) i!\  2^{i}.\vspace{-6mm}\]
    \hfill\qed
\end{proposition}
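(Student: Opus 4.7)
The plan is to split the count into two factors: the number of distinct $k$-element cut-sets, and the number of cosets $\D_n\alpha$ per cut-set. By \Cref{prop:cuts_set} (with $d_2 = e$), the cut-set $\cuts(\alpha)$ depends only on the coset $\D_n\alpha$. Moreover, any two $k$-subsets of $[n]$ are related by a dihedral relabelling of positions, so the number of cosets with $\cuts(\alpha) = S$ depends only on $|S| = k$; summing over the $\binom{n}{k}$ size-$k$ subsets of $[n]$ yields the first factor.

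For the per-cut-set count, fix $S = \{s_1 < \dots < s_k\}$ and consider the natural map sending a coset $\D_n\alpha$ with $\cuts(\alpha) \subseteq S$ to the signed circular arrangement of the $k$ labelled arcs $A_1, \dots, A_k$ cut out of the identity genome at the positions of $S$. Such an arrangement records a cyclic ordering of the arcs together with an orientation (flipped or not) for each; since flipping an arc preserves its internal adjacencies, and the coset $\D_n\alpha$ is invariant under rotating or reflecting the whole cycle, this map is a bijection onto the $2^{k-1}(k-1)!$ signed circular arrangements of $k$ labelled arcs (i.e.\ $(k-1)!$ cyclic orderings times $2^k$ orientation choices, divided by $2$ for reflection). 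The coset satisfies $\cuts(\alpha) = S$ precisely when no original adjacency between consecutive arcs is preserved---the pure $k$-opt condition of Helsgaun.

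Applying inclusion-exclusion on the preserved adjacencies, and noting that preserving $j$ specified adjacencies merges the affected arcs into $k-j$ super-arcs with $2^{k-j-1}(k-j-1)!$ reconnections for $j < k$ (and the single identity reconnection when $j = k$), we obtain
\[
\#\{\D_n\alpha : \cuts(\alpha) = S\}
\;=\; (-1)^k + \sum_{j=0}^{k-1}(-1)^j \binom{k}{j} 2^{k-j-1}(k-j-1)!,
\]
which becomes $\mathrm{A061714}(k)$ after the substitution $i = k - j - 1$ (using $(-1)^{k-i-1} = (-1)^{k+i-1}$). Multiplying by the $\binom{n}{k}$ cut-set choices yields the formula. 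The main obstacle is rigorously establishing the bijection in the second paragraph: in particular, verifying that flipping an arc preserves its internal cut structure (so the intra-arc positions remain fixed by $[\alpha, c]$), and that the dihedral quotient on the full genome matches exactly the cycle-level dihedral symmetry on the arcs. Once this is in hand, the inclusion-exclusion and change of variable are routine.
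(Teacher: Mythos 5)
Your overall strategy matches the paper's: factor the count as (number of $k$-element cut-sets) $\times$ (number of cosets per cut-set), identify the second factor with the ways of reconnecting $k$ labelled arcs so that no original junction is restored, and recognise this as $\mathrm{A061714}(k)$. Where you differ is that the paper stops at the identification, outsourcing the formula to Helsgaun's interpretation of $\mathrm{A061714}(k)$ as the number of pure $k$-opt moves, whereas you actually derive it: the bijection onto the $2^{k-1}(k-1)!$ signed circular arrangements of $k$ labelled arcs, followed by inclusion--exclusion over preserved adjacencies (with the $j=k$ term contributing the isolated $(-1)^k$). I checked your inclusion--exclusion and the substitution $i=k-j-1$; both are correct, and the values agree with the paper's tables for $k=2,3$. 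This makes your argument more self-contained than the paper's, at the cost of the bookkeeping you rightly flag --- in particular the check that a flipped arc still has all its internal adjacencies fixed by $[\alpha,c]$ (it does, since $c$ sends $\ncol{\,j\,}\mapsto\ncol{j-1}$, so reversal-with-sign-flip preserves the condition $c\alpha(i)=\alpha c(i)$ on interior positions).

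One genuine error to fix: your justification that the per-cut-set count depends only on $k$ --- ``any two $k$-subsets of $[n]$ are related by a dihedral relabelling of positions'' --- is false. For $n=6$ the sets $\{1,2\}$ and $\{1,4\}$ lie in different $\D_n$-orbits, since the dihedral action preserves the gap structure between cuts. Fortunately the claim you need does not rest on this: your own bijection in the second paragraph shows that the set of cosets with $\cuts(\alpha)\subseteq S$ is parametrised by arrangements of $k$ labelled arcs independently of the arc lengths, which is exactly the paper's stated reason (``the number of ways to reconnect a genome cut into $k$ segments is independent of the positions of the cuts''). Delete the dihedral-transitivity sentence and let the bijection carry that weight, and the argument is sound.
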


The interested reader can verify via the generating function for $A061714$ that the sum of the above expression over all $k$ is equal to the number of cosets $\HO_n/\D_n$.

\subsection{How many distinct \textit{k}-cut rearrangement actions are there?}
As described in \Cref{sec:ER}, it makes sense to think about rearrangement actions as algebra elements corresponding to double cosets. \Cref{prop:cuts_set} shows that each double coset admits a set of cut-sets which are all of the same size. We can therefore identify distinct rearrangement actions which perform a given number of cuts. The number of actions performing $k$ cuts for $n$ regions, $n \leq 7$ is summarised in \Cref{tbl:cuts_double_cosets}. 

While we have no closed form expression for the number of such actions, we can now write the number of distinct rearrangement actions performing $2$ or $3$ cuts. These respectively correspond to inversions and transpositions (\Cref{lem:sets_of_rearrangements}) which seem to be the most biologically reasonable and the most well-studied in our specific context. First, we have the following proposition,

\begin{proposition}\label{prop:num_inversions}
     For a circular genome with $n$ oriented regions, the number of distinct rearrangement actions which apply 2 cuts (that is, the number of algebra elements $\textbf{z}\alpha\textbf{z}$ such that $|\cuts(\alpha)| = 2$) is given by $\floor*{\frac{n}{2}}$. 
\end{proposition}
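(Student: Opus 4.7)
The plan is to show that every 2-cut rearrangement action is an inversion parametrised by a segment length $k \in \{1,\ldots,n-1\}$, that the circular gap $g(k) := \min(k, n-k)$ is a complete invariant of the double coset $\mathcal{D}_n\alpha\mathcal{D}_n$, and that $g$ ranges over exactly the set $\{1,\ldots,\lfloor n/2\rfloor\}$. A bijection between 2-cut actions and values of $g$ then gives the count $\lfloor n/2 \rfloor$.

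The reduction is provided by \Cref{lem:sets_of_rearrangements}(i) together with \Cref{prop:inversion}: every $\textbf{z}\alpha\textbf{z}$ with $|\mathrm{cuts}(\alpha)| = 2$ is realised by an inversion $\alpha_{k,r}$ of $k$ consecutive regions starting at position $r \in [n]$, with $k \in \{1,\ldots,n-1\}$ (the degenerate values $k=0$ and $k=n$ produce the identity or a pure flip, both with empty cut set). The cut set of $\alpha_{k,r}$ is $\{r-1,\ r+k-1\} \pmod{n}$, and by \Cref{prop:cuts_set} the $\mathcal{D}_n$-orbit of this set depends only on the double coset. Because the orbits of $\mathcal{D}_n$ on 2-element subsets of $[n]$ are classified by the minimum of $|a-b|$ and $n-|a-b|$, the gap $g(k)$ is a well-defined double-coset invariant, giving at most $\lfloor n/2\rfloor$ classes.

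For the converse, any two inversions with the same gap must be shown to lie in a common double coset. When the lengths are equal, conjugation by an appropriate power of the rotation in $\mathcal{D}_n$ shifts the starting position, identifying $\alpha_{k,r}$ and $\alpha_{k,r'}$. When $k' = n-k$, the relevant observation is that inverting a segment $S$ and inverting its complement $S^c$, applied to the identity genome, yield instances related by the global flip $f \in \mathcal{D}_n$. Thus there exist a reflection $f \in \mathcal{D}_n$ and position $r'$ with $\alpha_{n-k,r'} = f\, \alpha_{k,r}$, so the two inversions share a left $\mathcal{D}_n$-coset and hence the same double coset. Combined with the invariant, this produces a bijection $\textbf{z}\alpha\textbf{z} \mapsto g(k)$.

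The main obstacle is pinning down the identity $\alpha_{n-k,r'} = f\, \alpha_{k,r}$ rigorously. The formulas of \Cref{prop:inversion} take different forms according to the parity of the segment length, so the odd- and even-length cases would have to be handled separately under a direct computational approach. A cleaner route avoids manipulation of the transposition products: apply each side of the claimed identity to the identity of $\HO_n$ and observe that $\alpha_{k,r}$ reverses the one-row subword indexed by $S$ with orientation flips while fixing the complementary subword, so $f\,\alpha_{k,r}$ applies a subsequent global flip to the whole string, producing precisely the one-row permutation of $\alpha_{n-k,r'}$ for the value of $r'$ dictated by $f$. Since left multiplication in $\HO_n$ is free, equality on a single element suffices, completing the argument.
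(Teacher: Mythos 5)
Your proof is correct and follows essentially the same route as the paper's: both classify the 2-cut actions by the circular distance between the two cuts modulo the dihedral action (giving $\lfloor n/2\rfloor$ classes) and both use the fact that each cut configuration admits exactly one non-trivial reconnection. The only difference is one of rigour---you make explicit, via rotation conjugation and the flip identity $\alpha_{n-k,r'} = f\,\alpha_{k,r}$, why same-gap inversions (including the complementary-segment pair) share a double coset, a step the paper compresses into the assertion that there is ``only one non-trivial way to put the two segments back together.''
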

\begin{proof} Consider the possibilities for making two cuts. There are $n$ possible positions to place a cut, and we want to choose two (this is why there are $\binom{n}{2}$ cosets that achieve this). In the double coset case, some choices are equivalent due to dihedral symmetry of the regions, so we think about the distance between the two cuts (or equivalently, the sizes of the two segments). If $n$ is even, we can choose two cuts that are at most $\frac{n}{2}$ regions apart. If $n$ is odd, the cuts can be at most  $\frac{n-1}{2}$ regions apart,
and so the total number of options in each cases is $\floor{\frac{n}{2}}$. Finally, for each choice of cuts there is only one non-trivial way to put the two segments back together, giving us $\floor{\frac{n}{2}}$ total rearrangements performing $2$ cuts.
\end{proof}

We can similarly consider the number of distinct rearrangement actions that perform precisely 3 cuts. First, choose a configuration of the three cuts along the circular genome. That is, choose a partition of the number of regions $n$ into 3 parts. Each partition leads to a different number of options for reconnecting the three segments. For example, if $n=6$, the partition $n\!=\!2\!+\!2\!+\!2$ (all parts are the same size) leads to two distinct events: inverting every segment, or inverting any two segments. The partition $n\!=\!1\!+\!1\!+\!4$ (two parts the same size) leads to three distinct events: inverting every segment, inverting two segments of size 1, or inverting a segment of size 1 and the segment of size 4. In general, we have the following.

\begin{lemma} \label{prop:put-back-together}
    After breaking a genome into three unlabelled segments, there are 2 distinct actions placing the genome back together again if the parts are all the same size, 3 distinct actions if two segments are the same size, and 4 distinct actions if the segments all have different sizes.\hfill\qed
\end{lemma}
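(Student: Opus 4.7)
The plan is to show that each of the four pure 3-cut reconnections of the three segments $A, B, C$ (in cyclic order) is, up to dihedral equivalence, of the form ``flip the segments in $S$ in place'' for some subset $S \subseteq \{A, B, C\}$ with $|S| \geq 2$, and then to count orbits of these four subsets under the subgroup of $\mathcal{D}_n$ stabilising the cut configuration. Observing that there are $2^3 = 8$ rearrangements obtained by independently deciding whether to flip each segment in place, a direct application of \Cref{prop:cut_set} yields zero cuts for the empty subset (the identity), two cuts for each singleton (an ordinary single-segment inversion), and exactly three cuts for each $S$ with $|S| \geq 2$. This produces $\binom{3}{2} + \binom{3}{3} = 4$ subset-flip rearrangements producing three cuts, matching the count of four pure 3-cut reconnections established earlier via the sequence $\mathrm{A}061714$; hence, up to dihedral equivalence, every pure 3-cut reconnection (such as those depicted in \Cref{fig:counted}) coincides with a unique subset flip. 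A representative check: the superficially ``reordering'' move that swaps $B$ and $C$ is dihedrally equivalent to flipping all three segments, since the cyclic words $ACB$ and $A^{-1}B^{-1}C^{-1}$ share a dihedral orbit (applying reverse-and-flip to $ACB$ yields $A^{-1}B^{-1}C^{-1}$ up to rotation).

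Next, I would identify the subgroup $H \leq \mathcal{D}_n$ stabilising the cut configuration. By \Cref{prop:cuts_set}, two rearrangements with the same cut set lie in the same double coset exactly when they are related by right multiplication by an element of $H$, up to left multiplication by $\mathcal{D}_n$ (which leaves cut sets unchanged); thus the number of distinct actions equals the number of $H$-orbits on the four subset-flip reconnections. The group $H$ acts on $\{A,B,C\}$ by permutation. When all three segment sizes are distinct, no non-identity dihedral symmetry preserves the cut set, so $H$ is trivial. When exactly two segments have the same size (say $|B| = |C|$), $H \cong \mathbb{Z}/2$ is generated by the reflection through the midpoint of segment $A$, which exchanges $B$ and $C$. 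When all three segments have the same size, $H \cong \mathcal{S}_3$ is generated by rotation by $n/3$ together with any of the three reflections through segment midpoints, and acts as the full symmetric group on $\{A, B, C\}$.

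Finally, I would tabulate the $H$-orbits on the four subsets $\{A,B\}$, $\{A,C\}$, $\{B,C\}$, $\{A,B,C\}$. With $H$ trivial, all four are distinct orbits, giving 4 actions. With $H \cong \mathbb{Z}/2$ swapping $B \leftrightarrow C$, the pair $\{A,B\}$ and $\{A,C\}$ merges into a single orbit while $\{B,C\}$ and $\{A,B,C\}$ are each fixed, giving 3 actions. With $H \cong \mathcal{S}_3$, the three two-element subsets form a single orbit and $\{A,B,C\}$ is fixed, giving 2 actions, as required. The main obstacle is the correspondence between pure 3-cut reconnections and subset flips established in the first step; a fully explicit verification requires either working through the dihedral orbits of the $16$ cyclic-order-plus-orientation configurations on three segments (which yields eight orbits in total, four of which are 3-cut and match the four subset flips), or inspecting the four diagrams in \Cref{fig:counted} individually. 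The orbit count that follows is straightforward.
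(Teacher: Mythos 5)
Your proof is correct, and it turns the paper's sketch into an actual argument: the paper states the lemma without proof and justifies it only by pointing to the four reconnection diagrams of \Cref{fig:counted} and the worked $n=6$ examples in the preceding paragraph. The underlying idea is the same --- the four labelled $3$-cut reconnections are the ``invert the segments in $S$'' moves for $|S|\in\{2,3\}$ (the paper's own examples describe the distinct events in exactly this language), and coincidences of segment sizes merge some of them --- but you supply the two pieces the paper omits: the cut-count computation isolating the subset flips with $|S|\geq 2$, and the reduction, via \Cref{prop:cuts_set}, of ``number of distinct actions'' to the number of orbits under the cut-set stabiliser $H\leq\mathcal{D}_n$, with $H$ correctly identified as trivial, $\mathbb{Z}/2$, or $\mathcal{S}_3$ according to the partition type. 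The only point left slightly open is the one you flag yourself: the $4=4$ count yields a bijection with the four pure $3$-cut reconnections only if the four subset flips with $|S|\geq 2$ are pairwise inequivalent as single cosets $\mathcal{D}_n\alpha$; this follows at once from the cyclic-word enumeration you describe (a reflection exchanges the cyclic orders $ABC$ and $ACB$, so distinct orientation patterns in the order $ABC$ lie in distinct dihedral orbits), so it is a routine verification rather than a genuine gap. A second hidden check behind your orbit tabulation --- that a reflection in $H$ acts on the subset flips purely by its permutation of $\{A,B,C\}$, the orientation reversals cancelling under conjugation --- also holds and is worth a sentence if this were written up in full.
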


\Cref{prop:put-back-together} can be obtained by considering the possible options for reconnecting parts of a genome, as shown in \Cref{fig:counted}. Further, we can count the number of partitions of $n$ with 0, 2 or 3 parts equal with the following lemma.

\begin{lemma} \label{prop:parts-the-same}
    There are $\floor*{\frac{n-1}{2}}$ partitions of $n$ into $3$ parts which have exactly 2 parts of equal size, and one partition into three equal parts if and only if $3|n$. In other words, there are $1 - \ceil*{\frac{n}{3}} + \floor*{\frac{n}{3}}$ partitions into 3 equal parts. \hfill\qed
\end{lemma}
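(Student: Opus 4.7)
The plan is to parametrise the partitions $\{p, p, q\}$ of $n$ into three positive parts with a repeated value by the repeated value $p$ itself. Given $p \geq 1$, the remaining part is forced to equal $q = n - 2p$, so a valid partition exists precisely when $q \geq 1$, i.e., $p \in \{1, 2, \ldots, \lfloor (n-1)/2 \rfloor\}$. Each admissible $p$ yields a unique unordered partition containing at least one pair of equal parts, giving $\lfloor (n-1)/2 \rfloor$ such partitions in total.

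Next, I would isolate the single degenerate case $q = p$. This forces $p = n/3$, which lies in the admissible range precisely when $3 \mid n$; in that case it produces the unique partition into three equal parts, establishing the second claim. The rewrite $1 - \lceil n/3 \rceil + \lfloor n/3 \rfloor$ then follows by direct verification: the expression equals $1$ exactly when $n/3 \in \mathbb{Z}$ and $0$ otherwise.

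Finally, to extract the count of partitions with \emph{exactly} two equal parts from the running total, I would simply subtract the (at most one) all-equal partition, obtaining $\lfloor (n-1)/2 \rfloor - [3 \mid n]$. Under the natural reading in which the three-equal partition is counted separately by the lemma's second claim, the total of $\lfloor (n-1)/2 \rfloor$ partitions with at least two equal parts is recovered, matching the stated formula. There is no substantive obstacle here: the argument is a direct one-parameter enumeration, and the only care required is consistent handling of the boundary conditions at $p = \lfloor (n-1)/2 \rfloor$ and $p = n/3$ so that the floor/ceiling bookkeeping lines up when $n$ is even and when $n$ is a multiple of $3$.
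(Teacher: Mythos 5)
Your proposal is correct, and there is nothing to compare it against: the paper states this lemma with a bare \qed and supplies no proof, so your one-parameter enumeration of $\{p,p,n-2p\}$ over $p\in\{1,\dots,\lfloor(n-1)/2\rfloor\}$ is exactly the argument the authors leave implicit. Your care over the ``exactly'' versus ``at least'' reading is warranted and correctly resolved: taken literally, the first claim over-counts by one whenever $3\mid n$ (e.g.\ for $n=6$ only $4+1+1$ has exactly two equal parts, not $\lfloor 5/2\rfloor=2$), and the way the lemma is consumed in the proof of Proposition \ref{prop:3-cut-actions} --- where the total is computed as $4\cdot\mathrm{A001399}(n-3)$ minus $\lfloor(n-1)/2\rfloor$ minus the all-equal indicator --- confirms that $\lfloor(n-1)/2\rfloor$ must count partitions with \emph{at least} two equal parts, with the strict count being $\lfloor(n-1)/2\rfloor - 1$ when $3\mid n$, as you note.
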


The number of integer partitions of $n\!-\!3$ into three parts is given by the sequence $\mathrm{A001399}$ \cite{OEISFoundation2020}, and we can use this sequence along with the previous two propositions to obtain the following closed-form expression.

\begin{proposition} \label{prop:3-cut-actions}
    For a circular genome with $n$ oriented regions, the number of distinct rearrangement actions which apply 3 cuts (that is, the number of algebra elements $\textbf{z}\alpha\textbf{z}$ such that $|\cuts(\alpha)| = 3$) is given by,
    \[\floor*{\frac{n^2}{3}}-\floor*{\frac{n\vphantom{n^2}}{2}}.\]
\end{proposition}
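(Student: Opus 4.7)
The plan is to enumerate the three-cut rearrangement actions by first choosing an unordered partition $n = s_1 + s_2 + s_3$ representing the sizes of the three segments created by the cuts, and then counting the distinct reconnections for each such partition via \Cref{prop:put-back-together}. Writing $a$ for the number of such partitions with all three parts equal, $b$ for those with exactly two equal, and $c$ for those with all distinct parts, the total count is
\begin{equation*}
N(n) \;=\; 2a + 3b + 4c.
\end{equation*}
To avoid enumerating $c$ directly, I would use the elementary identity that the number of \emph{ordered} compositions of $n$ into three positive parts is $\binom{n-1}{2}$. Since an all-equal partition contributes one ordered composition, a partition with exactly two equal parts contributes three, and a partition with all parts distinct contributes six, we have $a + 3b + 6c = \binom{n-1}{2}$. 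Solving for $c$ and substituting gives
\begin{equation*}
N(n) \;=\; \tfrac{4}{3}a \;+\; b \;+\; \tfrac{(n-1)(n-2)}{3}.
\end{equation*}

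Next I would insert the values of $a$ and $b$ from \Cref{prop:parts-the-same}. Here $a = 1$ when $3 \mid n$ and $0$ otherwise, while the count $b$ of partitions with exactly two equal parts is parametrised by $\{k,k,n{-}2k\}$ with $1 \le k \le \lfloor (n-1)/2 \rfloor$ and $n - 2k \ne k$; the excluded equality $n = 3k$ is possible exactly when $3 \mid n$, so $b = \lfloor (n-1)/2 \rfloor - [3 \mid n]$. (This is the delicate point: depending on how one parses \Cref{prop:parts-the-same}, the $3 \mid n$ overcount has to be attributed either to $a$ or to $b$, and getting the bookkeeping consistent is what matters.)

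Finally, it remains to check that the resulting expression equals $\lfloor n^2 / 3 \rfloor - \lfloor n/2 \rfloor$. Rewriting $\lfloor n^2/3 \rfloor = \tfrac{1}{3}(n^2 - 1 + [3 \mid n])$ and using the elementary identity $\lfloor n/2 \rfloor + \lfloor (n-1)/2 \rfloor = n - 1$, the claimed difference $N(n) - \bigl(\lfloor n^2/3 \rfloor - \lfloor n/2 \rfloor\bigr)$ reduces by direct algebra to $-n + 1 + \lfloor n/2 \rfloor + \lfloor (n-1)/2 \rfloor = 0$. The main obstacle is purely bookkeeping: the $\tfrac{4}{3}a$ term, the floor in $b$, and the floor in $\lfloor n^2/3 \rfloor$ each behave differently modulo $3$ and modulo $2$, but the fractional and residue contributions cancel precisely because the $[3 \mid n]$ correction in $b$ exactly offsets the $[3 \mid n]$ correction in $\lfloor n^2/3 \rfloor$, while the two halves of $n - 1$ absorb the remaining linear term.
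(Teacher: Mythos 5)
Your proof is correct; I checked it against the values in \Cref{tbl:cuts_double_cosets} for $n=3,\dots,7$ and the algebra goes through. It shares the paper's basic decomposition---classify three-cut actions by the unordered partition of $n$ into the three segment sizes, and weight each partition by the $2$, $3$ or $4$ reconnections of \Cref{prop:put-back-together}---but it completes the count by a genuinely different route. The paper's proof starts from $4\cdot\mathrm{A061714}$-style overcounting, namely $4\cdot\mathrm{A001399}(n-3)=4\lfloor(n^2+4)/12\rfloor$ minus corrections for repeated parts, and then defers the final simplification to ``manipulating generating functions''; you never need the closed form for the number of partitions of $n$ into three parts at all, instead eliminating the all-distinct count $c$ through the ordered-composition identity $a+3b+6c=\binom{n-1}{2}$ and then carrying the floor-function arithmetic through explicitly via $\lfloor n^2/3\rfloor=\tfrac13(n^2-1+[3\mid n])$ and $\lfloor n/2\rfloor+\lfloor(n-1)/2\rfloor=n-1$. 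What your version buys is self-containment: it replaces an appeal to an OEIS closed form and an unshown generating-function manipulation with two elementary identities and an explicit cancellation. You are also right that the bookkeeping around \Cref{prop:parts-the-same} is the delicate point: as literally worded (``exactly 2 parts of equal size'') that lemma's count $\lfloor(n-1)/2\rfloor$ is off by $[3\mid n]$ (for $n=6$ only $\{4,1,1\}$ has exactly two equal parts, while $\lfloor 5/2\rfloor=2$ counts $\{2,2,2\}$ as well), so it must be read as ``at least two equal''; your choice $b=\lfloor(n-1)/2\rfloor-[3\mid n]$ is the correct resolution and agrees with the convention the paper's own displayed formula implicitly uses, since $4(a+b+c)-(a+b)-a=2a+3b+4c$.
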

\begin{proof}
    Combining \Cref{prop:put-back-together} and \Cref{prop:parts-the-same} we obtain
    \begin{align*}
        4\paren[\Big]{\mathrm{A001399}(n\!-\!3)} - \floor*{\frac{n-1}{2}} + \ceil*{\frac{n}{3}} - \floor*{\frac{n}{3}} - 1&\\ 
        = 4\floor*{\frac{n^2 + 4}{12}} - \floor*{\frac{n-1}{2}} + \ceil*{\frac{n}{3}} - \floor*{\frac{n}{3}} - 1&.
    \end{align*}
    Further, the above can be simplified to obtain the required form via manipulating generating functions.
\end{proof}

We currently have no expression for the number of rearrangement actions requiring four cuts, however we note that inversions and transpositions---which require just 2 or 3 cuts---are the most commonly considered rearrangement actions in the literature, and may be considered the most `biologically reasonable' rearrangements to consider as single events (see \Cref{tbl:models_in_literature}). Adding rearrangements which perform $>3$ cuts into a model, for example some block-interchanges \cite{Huang2010}, also requires additional considerations. For example, these rearrangements will be products of other elements in the model, potentially making assigning probabilities more difficult. Inverses must also be considered, since for inversions $\zed \alpha \zed$ performing more than 3 cuts, the inverse $\alpha^{-1}$ might not be a term in $\zed \alpha \zed$. In this case, $\zed \alpha^{-1} \zed$ must be included in the model if one wishes to ensure that rearrangements can be `undone' in a single step.

\begin{table}
\begin{center}
    \setlength{\tabcolsep}{4pt}
    \begin{tabular}{ c|cc|ccccccc } 
      & & & \multicolumn{7}{|c}{\# $\mathcal{D}_n\alpha\mathcal{D}_n$ performing $k$ cuts} \\
      n & $|\HO_n|$ & \#$\mathcal{D}_n\alpha\mathcal{D}_n$ & 0 & 2 & 3 & 4 & 5 & 6 & 7\\
      \hline
      2 & 8 &       2    & 1 & 1 &  &   &  & & \\
      3 & 48 &      4    & 1 & 1 & 2  &   &  & & \\
      4 & 384 &     13   & 1 & 2 & 3  & 7   &  & & \\
      5 & 3840 &    56   & 1 & 2 & 6  & 15 & 32  & & \\
      6 & 46080 &   381  & 1 & 3 & 9  & 41 & 114 & 213 & \\
      7 & 645120 &  3486 & 1 & 3 & 13 & 70 & 342 & 1083 & 1974  \\
     \hline
    \end{tabular}
    \caption{The number of double cosets (i.e. rearrangements) on $n$ regions that perform $k$ cuts, for each $n$ and $k$.}
    \label{tbl:cuts_double_cosets}
\end{center}
\end{table}

\subsection{Breakpoints} \label{sec:breakpoints}

Breakpoints are the locations on the genome at which rearrangements can or have occurred. This broad definition is similar to the set of cuts which we defined in \Cref{def:cuts}. The main difference is that the set of cuts keeps track of the \textit{positions} of the regions. While the specific position labels themselves are less important in the circular case (since positions can be rotated/flipped around without altering the genome) the cut set does describe a configuration of cuts around the genome. Breakpoints have a number of different definitions in the literature, some of which are specific to particular symmetries or other assumptions \cite{Sankoff2004, Fertin2009, Alexandrino2021}. In this section we provide a definition of a breakpoint for circular genomes that is analogous to the set of cuts.

\begin{definition} \label{def:breakpoint}
    Let $\sigma$ be an \textbf{oriented} genome with $n$ regions. The set of breakpoints is defined as,
    \[\mathrm{breakpoints}(\sigma) = \set{ (i, j) \in [n]^2 : \text{regions $i$ and $j$ are out of order with respect to the reference}  }.\vspace{-0.6cm}\]
\end{definition}

We can rewrite this definition more algebraically, as we did for the cuts set in \Cref{def:cuts}.

\begin{proposition}\label{def:breakpoint}
    Let $\sigma$ be an \textbf{oriented} genome with $n$ regions. The set of breakpoints can be written as,
    \begin{align*}
        \mathrm{breakpoints}(\sigma) &= \left\{ (\sigma^{-1}(i),\sigma^{-1} c(i)) : i\in [n] \text{ and } \sigma^{-1} c(i) \neq c\sigma^{-1}(i) \right\}\\
        &= \left\{ (\sigma^{-1}(i),\sigma^{-1} c(i)) : i\in [n] - \mathrm{Fix}([c,\sigma^{-1}])\right\}.
    \end{align*}
\end{proposition}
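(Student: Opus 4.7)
The plan is to obtain both equalities by unpacking the definition and then performing a direct algebraic manipulation with the commutator, mirroring the proof of \Cref{prop:cut_set} almost step for step.

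First I would reinterpret \Cref{def:breakpoint} in positional terms. A pair of regions $(i,j)$ is a breakpoint precisely when $i$ and $j$ appear in consecutive positions in $\sigma$ (otherwise the question of being out of order does not arise) and $j$ does not follow $i$ in the reference genome (the identity). Since the region occupying position $p$ of $\sigma$ is $\sigma^{-1}(p)$, the pair of regions at positions $p$ and $p+1 = c(p)$ is exactly $(\sigma^{-1}(p),\sigma^{-1}c(p))$. As $p$ ranges over $[n]$ this enumerates all consecutive pairs in $\sigma$ (no pair is missed or counted twice because $c$ acts as a single $n$-cycle on $[n]$ and $\sigma^{-1}$ is a bijection). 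The condition ``out of order with respect to the reference'' then says that the region $\sigma^{-1}c(p)$ is not the one that follows $\sigma^{-1}(p)$ in the identity, i.e. $\sigma^{-1}c(p)\neq c\sigma^{-1}(p)$. Setting $i=p$ yields the first displayed equality.

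For the second equality, I would show
\[
\sigma^{-1}c(i)=c\sigma^{-1}(i)\ \Longleftrightarrow\ [c,\sigma^{-1}](i)=i,
\]
by rearranging: applying $\sigma$ on the left gives $c(i)=\sigma c\sigma^{-1}(i)$, and then applying $c^{-1}$ gives $i=c^{-1}\sigma c\sigma^{-1}(i)=[c,\sigma^{-1}](i)$, using the convention $[c,\sigma^{-1}]=c^{-1}(\sigma^{-1})^{-1}c\sigma^{-1}=c^{-1}\sigma c\sigma^{-1}$ introduced for \Cref{prop:cut_set}. Taking complements in $[n]$ then replaces the inequality condition by membership in $[n]-\mathrm{Fix}([c,\sigma^{-1}])$.

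The only real pitfall is bookkeeping rather than mathematics: the orientation of the commutator matters (we use $[c,\sigma^{-1}]$ rather than $[\sigma^{-1},c]$), and I need to confirm we are only indexing over the unsigned positions $[n]$ rather than $[\ncol{n}]$, so that $c$ acts as the $n$-cycle $(1\,2\,\dots\,n)$ on the relevant orbit. Since $\sigma$ is oriented and the extension rule $\sigma(\ncol{i})=\ncol{\sigma(i)}$ ensures that the signed positions are handled symmetrically, restricting to $[n]$ loses nothing and the argument above goes through verbatim.
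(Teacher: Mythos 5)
Your proposal is correct: the first equality is exactly the positional unpacking of the definition, and your commutator manipulation for the second equality (including getting the order $[c,\sigma^{-1}]$ right) checks out. The paper in fact omits a written proof of this proposition, stating only that it is obtained ``as we did for the cuts set''; your argument is precisely that intended adaptation of the proof of \Cref{prop:cut_set}, so there is nothing to add.
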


We can see from \Cref{def:breakpoint} that the number of breakpoints for a genome $\sigma$ can be written
\[ 
    \mathrm{bp}(\sigma) := \left|\mathrm{breakpoints}(\sigma)\right| = \frac{2n-|\mathrm{Fix}([\sigma^{-1},c])|}{2}.
\]
This is equivalent to a statement in a paper by \citeauthor{Meidanis2000} \cite{Meidanis2000}, in which genomes are written in the content paradigm. For a genome $\tau$ written in the content paradigm, we have
\[ \mathrm{bp}(\sigma) = \frac{2n-|\mathrm{Fix}(r\tau r c)|}{2}.\]
To show these are equivalent, note that a genome $\tau$ in the content paradigm can be written
\[ \tau = \sigma^{-1} c \sigma, \]
where $\sigma$ represents the same genome represented by $\tau$, but in the position paradigm, as a map from regions to signed positions. We have,
\begin{align*}
    r \tau r c &= r \sigma^{-1} c \sigma r c  \\
    &=  r (\sigma^{-1} c^{-1} \sigma)^{-1}  r^{-1} c \tag*{(\text{since $r^{-1} = r$)}} \\
    &= ( r (\sigma^{-1} c^{-1} \sigma)  r^{-1})^{-1} c \\
    &= \sigma^{-1} c^{-1} \sigma c \tag*{(\text{one can show \(r (\sigma c^{-1} \sigma^{-1})  r^{-1} = (\sigma c^{-1} \sigma^{-1})^{-1}\)})} \\
    &= [\sigma, c],
\end{align*}

and it is easy to see that 
\begin{align*}
    |\mathrm{Fix}([\sigma, c])| = |\mathrm{Fix}([\sigma^{-1}, c])|.
\end{align*}
Thus the two statements are equivalent. 

\section{Discussion}
While model-based approaches to computing genome rearrangement can lead to to significant computational challenges, similar approaches have been proven to be fruitful in other areas of phylogenetics. In particular, we contend that the development of these techniques will inevitably open up more possibilities for new distance estimates, lead to a deeper understanding of genome rearrangements, and allow us to compare and evaluate existing distances in a broader context. 

The genome rearrangements literature covers a broad range of combinatorial approaches which have proven to be effective for a range of definitions of the genome rearrangement distance. However, these methods don't yet benefit from the higher-level understanding that can be obtained from an algebraic perspective. Conversely, existing algebraic, model-based approaches could be improved by incorporating and unifying some of the knowledge contained within the combinatorial literature. We have reviewed existing literature, and provided examples and formulations which begin to strengthen this connection. With model-based approaches come the problems of identifying and defining sensible models, including possible rearrangements and associated probabilities. The ideas and examples we have presented here help to simplify this process.

There are of course a number of challenges which need to be overcome in order to further the development of algebraic, model-based approaches for computing rearrangement distances. The primary obstacle is the inherent computational complexity of these methods, due to the factorially large number of genomes. For example, as a distance, the maximum likelihood estimate of time taken to transform one genome into another incorporates all possible paths between genomes and can be computed under any rearrangement model without modification, however is extremely computationally intensive. The MLE distances can be made more tractable using representation theory \cite{Terauds2018}, and other related measures have similar benefits but are easier to compute, for example the mean first passage time (MFPT) \cite{Francis2014, Terauds2021}, which can also be computed under any rearrangement model. Such methods have so far not had the same level of attention in the literature as combinatorial approaches which focus on a single model of genome rearrangement, and there exists a wealth of knowledge of Markov chain theory and related concepts which may be applied in this area.
We hope that formalising existing concepts from the literature will help to foster the development of new ideas, eventually allowing us to overcome some of the computational challenges inherent in genome rearrangement modelling problems.

\section*{Data availability statement}
Data sharing is not applicable to this article as no datasets were generated or analysed during the current study.

\addcontentsline{toc}{section}{References}
\scriptsize
\bibliographystyle{plainnat}
\bibliography{genomes_refs}

\end{document}